\documentclass[11pt]{article}
\usepackage{amsmath}
\usepackage{amsfonts}
\usepackage{amsthm}
\hsize=126mm
       \vsize=180mm
       \parindent=5mm
        \topmargin        0.00in
\oddsidemargin    0.08in \evensidemargin   0.08in \marginparwidth
0.00in \marginparsep     0.00in \textwidth        6.11in \textheight
8.69in \linespread{1.2}
\newtheorem{thm}{Theorem}[section]
\newtheorem{lem}{Lemma}[section]

\theoremstyle{remark}
\newtheorem{rem}{Remark}
\numberwithin{equation}{section}
\newcommand \RR   {\mathbb{R}}
\def\be{\begin{equation}}
\def\ee{\end{equation}}

\def\b{\bullet}
\def\la{\label}

\def\pa{\partial}
\def\f{\frac}

\def\t{\tilde}

\def\g{\gamma}

\def\r{\rho}
\def\b{\bar}
\def\f{\frac}

\def\t{\tau}

\begin{document}
\title{ Nonlinear Dynamical Stability of Newtonian Rotating  White Dwarfs and Supermassive Stars }
\author{ Tao Luo \&  Joel Smoller}
\date{}
\maketitle
\begin{abstract}We prove  general nonlinear stability and existence  theorems for  rotating star solutions which are axi-symmetric steady-state solutions of the compressible isentropic Euler-Poisson
equations in 3 spatial dimensions. We apply our results to rotating and non-rotating white dwarf,  and rotating high density supermassive (extreme relativistic) stars, stars which are in convective equilibrium and have uniform chemical composition.  This paper is a continuation of our earlier  work (\cite{LS1}).\end{abstract}
\tableofcontents

\section{Introduction}

The motion of a
 compressible isentropic perfect fluid with  self-gravitation is
modeled by the  Euler-Poisson equations in three space dimensions  (cf \cite{ch}):
\begin{equation}\label{1.1}
\begin{cases}
&\rho_t+\nabla\cdot(\rho {\bf v})=0,\\
&(\rho {{\bf v}})_t+\nabla\cdot(\rho {\bf v}\otimes {\bf v})+\nabla p(\rho)=-\rho \nabla \Phi,\\
&\Delta \Phi=4\pi \rho.
\end{cases}
\end{equation}
Here $\rho$, ${\bf v}=(v_1, v_2, v_3)$, $p(\rho)$ and $\Phi$ denote
the density, velocity, pressure and gravitational
potential, respectively.
  The gravitational potential is given by
\be\label{phi}\Phi(x)=-\int_{\RR^3} \frac{\rho(y)}{|x-y|}dy =-\rho\ast \frac{1}{|x|},\ee where $\ast$ denotes convolution.  System (\ref{1.1}) is used to model the evolution of a Newtonian gaseous star (\cite {ch}). In the  study of time-independent solutions of system (\ref{1.1}), there are two cases, non-rotating stars and rotating stars. An important question concerns the stability of such solutions.
Physicists call such star solutions stable provided that they are minima of an associated energy functional (\cite{wein}, p.305 \& \cite{ST}). Mathematicians, on the other hand, consider dynamical nonlinear stability via solutions of  the Cauchy problem. The main purpose of this paper is to prove a general theorem which relates these two notions and shows that for a wide class of Newtonian rotating stars,  minima of the energy functional are in fact, {\it dynamically} stable. This is done for
various equations of state $p=p(\r)$ which includes polytropes, supermassive,  and white dwarf stars.

For non-rotating stars, Rein (\cite{Rein}) has proved  nonlinear stability under  various hypotheses on the equation of state, including
in particular,  polytropes where $p=k\r^{\g}$, $\g>4/3$;  his theory applies to neither white dwarf nor supermassive stars. In a recent paper,
\cite{LS1}, we studied nonlinear stability of {\it rotating} polytropic stars, where $p=k\r^{\g}$, $\g>4/3$. In this paper, we generalize these results to rotating white dwarf and supermassive stars,  thereby completing the nonlinear stability theory for rotating (and non-rotating) compressible Newtonian stars*.\begin{figure}[b]\rule[-2.5truemm]{5cm}{0.1 truemm}
{\footnotesize \\
$^*$   In all cases under consideration,  stability is only
 ``conditional'' because no global in time solutions have been constructed so far for compressible Euler-type equations in three spatial dimensions;
 this is a major open problem.}\end{figure}

Our main theorem applies to minimizers of an energy functional with a total mass constraint. The crucial hypotheses are that
the infimum of the energy functional in the requisite class,  be finite and  negative. This is verified for both white dwarf and supermassive stars
by combining a scaling technique used by  Rein (\cite{rein1}), together with our method in \cite{LS1} where we use some particular solutions
of the Euler-Poisson equations in order to simplify the energy functional.  It should be noticed that neither the scaling technique in \cite{rein1} nor the method in \cite{LS1} using particular solutions of Euler-Poisson equations apply to white dwarf stars directly.  As a bi-product of our method, we prove the existence of a minimizer
for the energy functional, which is a rotating white dwarf star solution, in a class of functions having less symmetry than those solutions obtained in
\cite{AB} and \cite{FT2}. The method in \cite{AB} and  \cite{FT2} is to construct a specific minimizing sequence of the energy functional, each element in the sequence being a steady solution of the Euler-Poisson equations. In contrast, our method is to show that {\it any} minimizing sequence of the energy functional must be compact (cf. Theorem 3.1 below). This fact is crucial for both existence and stability results.

For a white dwarf star (a star in which gravity is balanced by electron degeneracy pressure), the pressure function $p(\r)$ obeys the following asymptotics (\cite{ch}, Chapter 10):
\be\label{wd1} \begin{cases} & p(\r)=c_1\r^{4/3}-c_2\r^{2/3}+\cdots, \qquad \r\to\infty,\\
& p(\r)=d_1\r^{5/3}-d_2\r^{7/3}+O(\r^3), \qquad \r\to 0,\end{cases}\ee
where  $c_1$, $c_2$, $d_1$ and $d_2$ are positive constants. The existence theory for non-rotating white dwarf stars is classical provided the mass M of the star is not greater than a critical mass $M_c$ ($ M\le M_c$) (\cite{ch}). For rotating white dwarf stars with prescribed total
mass and angular momentum distribution, Auchumuty and Beals (\cite{AB}) proved that if the angular momentum distribution is nonnegative, then
existence holds if $M\le M_c$. Friedman and Turkington (\cite{FT2}) proved  existence for any mass provided that the angular momentum distribution is everywhere positive; see Li (\cite{Li1}),  Chanillo \& Li
(\cite{Li2}) and Luo \& Smoller (\cite{LS}) for related results for  rotating star solutions with prescribed constant angular velocity. To the best of our knowledge, our stability theorem in this paper for rotating and non-rotating white dwarf stars with $M\le M_c$ is the first nonlinear dynamical stability theorem for such stars.

 For a supermassive star (a star  which is supported by the pressure of radiation
rather than that of matter; sometimes called an extreme relativistic degenerate star \cite{ST}), the pressure $p(\r)$ is given by (\cite{wein}):
\be\label{supermassive} p(\r)=k \r^{\gamma},\  \gamma=4/3,\ee
where $k>0$ is a constant. For non-rotating spherically symmetric solutions for  supermassive stars, Weinberg (\cite{wein}) showed
that the total energy vanishes; thus to quote Weinberg (\cite{wein}, p. 327) ``the polytrope with $\gamma=4/3$ is trembling between stability
and instability'', and he remarks that one  needs to use general relativity to settle this stability problem. For rotating supermassive star solutions, we show here that the energy is negative $E<0$ due to the rotational kinetic energy (see (4.26) below). Thus the stability problem falls within the framework of Newtonian mechanics and so our general stability theorem applies to show that {\it rotating} supermassive stars are nonlinearly stable, provided that $M\le M_c$.

For the stability of both white dwarfs and supermassive stars, we require that the total mass of each one lies below to a corresponding critical mass,
a ``Chandrasekhar'' limit.  We show that this holds because
the pressure function for both is of the order $\r^{4/3}$ as $\r\to\infty$.

The above dynamical stability results for rotating stars apply for  axi-symmetric perturbations.  For   non-rotating
stars,  G. Rein (\cite{Rein}) proved nonlinear dynamical stability for  general perturbations. However, his result does not apply to
white dwarf stars. For non-rotating white dwarf stars, the problem was formulated by Chandrasekhar \cite{chan} in 1931 (and also in \cite{fowler} and \cite{landau}) and leads to an equation for the density which was called the `` Chandrasekhar equation '' by Lieb and Yau in \cite{liebyau}. This equation predicts the gravitational collapse at some critical mass (\cite{chan} and \cite{ch}). This gravitational collapse was also verified by Lieb and Yau (\cite{liebyau}) as the limit of Quantum Mechanics. In Section 5,  we  prove the nonlinear dynamical stability for non-rotating white dwarf stars with general perturbations provided that  the total mass is below  some critical mass.

Other related results besides those mentioned above for compressible fluid rotating stars can be found in \cite{Au},  \cite{CF}, \cite{FT1},   and  \cite{LS}.

The linearized stability and instability  for non-rotating  and rotating stars were   discussed by Lin (\cite{lin} ), Lebovitz (\cite{lebovitz})  and Lebovitz \& Lifschitz (\cite{lebovitz1}). Related nonlinear stability and instability results for gaseous stellar objects can be found in Guo \& Rein (\cite{guo1},
 \cite{guo2}) and Jang (\cite{Jang}). Related results for the Euler-Poisson equations of self-gravitating fluids
can be found in \cite{LY}, \cite{Humi}, \cite{MK} and \cite{Wang}.

\section{ Rotating Star Solutions}
We now introduce some notation which will be used throughout this paper. We use $\int$ to denote $\int_{\RR^3}$, and use $||\cdot||_q$ to denote $||\cdot||_{L^q(\RR^3)}$.  For any point $x=(x_1, x_2, x_3)\in \RR^3$, let
\be\label{1.5'} r(x)=\sqrt{x_1^2+x_2^2}, \  z(x)=x_3,\  B_R(x)=\{y\in \RR^3, \ |y-x|<R\}.\ee
For any function $f\in L^{1}(\RR^3)$, we define the operator $B$ by
\be\label{B} B f(x)=\int \frac{f(y)}{|x-y|}dy =f\ast \frac{1}{|x|}.\ee
Also, we use $\nabla$ to denote the spatial gradient, i.e.,  $\nabla=\nabla_x=(\pa_{x_1},\ \pa_{x_2}, \ \pa_{x_3})$.  $C$ will denote a generic positive constant.

A rotating star solution $(\tilde \rho, \tilde {\bf v},\tilde \Phi)(r, z)$, where $r=\sqrt{x_1^2+x_2^2}$  and $z=x_3$, $x=(x_1, x_2, x_3)\in \RR^3,$  is an {\it axi-symmetric}  time-independent solution of system (\ref{1.1}), which models a star rotating about the $x_3$-axis. Suppose the angular momentum (per unit mass),  $J(m_{\tilde \r}(r))$ is prescribed, where
\be m_{\tilde \r}(r)=\int_{\sqrt{x_1^2+x_2^2}<r}\tilde \r(x)dx=\int_0^r 2\pi s\int_{-\infty}^{+\infty}\tilde \r(s, z)dsdz, \ee
is the mass in the cylinder $\{x=(x_1, x_2, x_3): \sqrt{x_1^2+x_2^2}<r\}$, and $J$ is a given function.
   In this case, the velocity field
   $\tilde {\bf v}(x)=(v_1, v_2, v_3)$ takes the form $$\tilde {\bf v}(x)=(-\f{x_2 J(m_{\tilde \r}(r))}{r^2},  \f{x_1 J(m_{\tilde \r}(r))}{r^2}, 0). $$ Substituting  this in (\ref{1.1}), we find that $\tilde \r(r, z)$ satisfies the following two equations:
   \be\label{03}\begin{cases} &\partial_r p(\tilde \r)=\tilde \r\pa_r (B \tilde \r)+\tilde \r L(m_{\tilde \r}(r)r^{-3}, \\
                     &\partial_z p(\tilde \r)=\tilde \r\pa_z (B \tilde \r),
\end{cases}\ee
where the operator $B$ is defined in (\ref{B}), and
 $$ L(m_{\tilde \r})=J^2(m_{\tilde \r})$$ is the square of the angular momentum.
We define
\be\label{z1}
A(\r)=\r\int_0^{\r}\f{p(s)}{s^2}ds. \ee
  It is easy to verify that (cf. \cite{AB}) (\ref{03}) is equivalent to
  \be\label{z2}
  A'(\tilde \r(x))+\int_{r(x)}^{\infty}L(m_{\tilde \r}(s)s^{-3}ds-B\tilde \r(x)=\lambda, \qquad {\rm where~} \tilde \r(x)>0,\ee
  for some constant $\lambda$.   Here $r(x)$ and  $z(x)$ are as in (\ref{1.5'}).
  Let $M$ be a positive constant and let $W_M$ be the set of functions $\r$ defined by ,
\begin{align*}W_M=&\{\r: \RR^3\to \RR,\ \r  {\rm~is~axisymmetric, ~}\rho\ge 0, a.e., \\
 &\int\rho(x)dx=M, \ \int\left(A(\r(x))+\f{\rho(x)L(m_{\r}(r(x)))}{r(x)^2}+\r(x)B\r(x)\right)dx<+\infty.\}\end{align*}
 For $\rho\in W_M$,  we define the {\bf energy functional} $F$ by
  \begin{align}\label{E}
  F(\rho)&=\int
  [A(\rho(x))+\f{1}{2}\f{\rho(x)L(m_{\r}(r(x)))}{r(x)^2}-\frac{1}{2}\rho(x) B\rho(x)]dx.
  \end{align}
  In (\ref{E}),  the first
 term denotes the potential energy, the middle term denotes the rotational kinetic energy and the third term is the gravitational energy.

 For a white dwarf star, the pressure function $p(\r)$ satisfies the following conditions:
 \be\label{whitedwarf}
 \lim_{\r\to 0+}\frac{p(\r)}{\r^{4/3}}=0,\  \lim_{\r\to \infty}\frac{p(\r)}{\r^{4/3}}=\mathfrak{K},\  p'(\r)>0 {~\rm as~} \r>0,
 \ee
 where $\mathfrak{K}$ is a finite positive constant.
  Assuming  that the function  $L\in C^1[0, M]$ and satisfies
  \be\label{L} \ L(0)=0,\ L(m)\ge 0, \ for~ 0\le m\le M,\ee
 Auchmuty and Beals (cf. \cite{AB}) proved the existence of a minimizer of the
functional $F(\rho)$ in the class of functions  $W_{M, S}=W_M\cap W_S$, where
  \begin{equation}\label{W2}
  W_S=\{ \r: \RR^3\to \RR,\   \r(x_1, x_2, -x_3)=\r(x_1, x_2, x_3),\ x_i\in \RR, i=1,\ 2, \ 3\}.
  \end{equation} Their result is given  in the following theorem.
\begin{thm}\label{aa1}(\cite{AB}). If the pressure function $p$ satisfies (\ref{whitedwarf})\  (for either $0<\mathfrak{K}<+\infty$ or $\mathfrak{K}=+\infty$  and (\ref{L}) holds, then there exists a constant
$M_c>0$ depending on the constant $\mathfrak{K}$ in (\ref{whitedwarf})(if $\mathfrak{K}=+\infty$ then $M_c=+\infty$, if $0<\mathfrak{K}<+\infty$, then $0<M_c<+\infty$) such that, if
\be\label{criticalmass} M<M_c,
\ee
  then there exists a function $\hat \r(x)\in W_{M, S}$ which minimizes $F(\rho)$ in  $W_{M, S}$. Moreover, if
\be\label{G}
G=\{x\in \RR^3:\  \hat \r(x)>0\},\ee
then $\bar G$ is a compact set in $\RR^3$, and $\hat \r\in C^1(G)\cap C^{\beta}(\RR^3)$ for some $0<\beta<1$. Furthermore, there exists a constant $\mu<0$
such that
\be\label{lambda}
\begin{cases}
& A'(\hat \r(x))+\int_{r(x)}^{\infty}L(m_{\hat \r}(s)s^{-3}ds-B\hat \r(x)=\mu, \qquad x\in G,\\
&\int_{r(x)}^{\infty}L(m_{\hat \r}(s)s^{-3}ds-B\hat \r(x)\ge \mu, \qquad x\in \RR^3-G.\end{cases}\ee
\end{thm}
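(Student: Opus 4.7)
The plan is to execute the direct method of the calculus of variations on the constrained problem $\inf_{W_{M, S}} F$. The argument splits into four stages: (a) a uniform lower bound on $F$ that activates the critical mass condition $M<M_c$; (b) extraction of a strongly convergent minimizing subsequence (the main difficulty); (c) weak lower semicontinuity and passage to the limit to produce a minimizer $\hat\rho$; and (d) derivation of the Euler-Lagrange system (\ref{lambda}) together with the regularity and compact-support statements.

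For (a), the asymptotic (\ref{whitedwarf}) together with $A(\rho)=\rho\int_0^\rho p(s)s^{-2}\,ds$ gives $A(\rho)\sim 3\mathfrak{K}\rho^{4/3}$ as $\rho\to\infty$, hence $\int A(\rho)\,dx \ge 3\mathfrak{K}\|\rho\|_{4/3}^{4/3} - C\|\rho\|_{1}$. A standard Hardy-Littlewood-Sobolev (Chandrasekhar) inequality combined with the interpolation $\|\rho\|_{6/5}^2\le\|\rho\|_1^{2/3}\|\rho\|_{4/3}^{4/3}$ gives $\tfrac12\int \rho B\rho\,dx\le \tfrac12 K_0\|\rho\|_{1}^{2/3}\|\rho\|_{4/3}^{4/3}$, and the rotational term is nonnegative by (\ref{L}). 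Combining these,
\[
F(\rho)\ge \bigl(3\mathfrak{K}-\tfrac12 K_0 M^{2/3}\bigr)\|\rho\|_{4/3}^{4/3} - CM,
\]
which identifies $M_c=(6\mathfrak{K}/K_0)^{3/2}$ (with $M_c=+\infty$ when $\mathfrak{K}=+\infty$) and shows that $F$ is bounded below and $L^{4/3}$-coercive on $W_{M, S}$ whenever $M<M_c$.

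Stage (b) is the core. Let $\{\rho_n\}\subset W_{M, S}$ be minimizing; stage (a) yields uniform $L^1\cap L^{4/3}$ bounds and a weak limit $\hat\rho$. The obstruction is loss of mass to infinity, which I would resolve by concentration-compactness adapted to axi-symmetric, $x_3$-even densities. One first verifies $I_M:=\inf F<0$ by testing $F$ on an explicit axi-symmetric bump chosen so that the gravitational contribution dominates (exploiting the subcritical freedom granted by $M<M_c$). The axi-symmetry together with the reflection in $W_S$ eliminates all translation freedom, so vanishing is ruled out by $I_M<0$. Splitting is excluded by the strict subadditivity $I_M<I_{\alpha M}+I_{(1-\alpha)M}$ for $0<\alpha<1$; this is the principal technical obstacle, because $p\sim\mathfrak{K}\rho^{4/3}$ matches the Coulombic scaling and places the problem at the borderline of the Chandrasekhar inequality, so subadditivity must be extracted from a careful scaling argument (using that $I_M$ behaves like a strictly superadditive power of $M$ in the relevant regime) rather than from soft comparison. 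These exclusions yield strong convergence $\rho_n\to\hat\rho$ in $L^1\cap L^{4/3}$, and by interpolation in $L^{6/5}$.

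For (c) and (d), convexity of $A$ (from $p'>0$) gives weak lower semicontinuity of $\int A(\rho)\,dx$; HLS applied to the strong $L^{6/5}$ convergence gives continuity of the gravitational bilinear form; and the kinetic term is handled by dominated convergence using pointwise convergence of $m_{\rho_n}(r)$ a.e. Hence $\hat\rho$ minimizes $F$. Testing against admissible mass-preserving perturbations (compactly supported in $G$ for equality, nonnegative off $G$ for inequality) yields (\ref{lambda}) with a common multiplier $\mu$. For regularity, inverting (\ref{lambda}) via the strict monotonicity of $A'$ gives an $L^\infty$ bound on $\hat\rho$; classical potential estimates then give $B\hat\rho\in C^{\beta}(\RR^3)$, and (\ref{lambda}) propagates this to $\hat\rho\in C^{\beta}(\RR^3)\cap C^{1}(G)$. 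Finally, as $|x|\to\infty$ both $B\hat\rho$ and $\int_{r(x)}^\infty L(m_{\hat\rho}(s))s^{-3}\,ds$ tend to $0$, so the inequality in (\ref{lambda}) forces $\mu<0$ and hence $\overline{G}$ is compact.
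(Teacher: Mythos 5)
Be aware first that the paper does not prove this theorem: it is quoted verbatim from Auchmuty--Beals \cite{AB}, and the introduction explicitly contrasts their method --- constructing a \emph{specific} minimizing sequence each of whose elements is itself a steady solution of the Euler--Poisson equations --- with the concentration-compactness scheme you propose, which is the method the authors reserve for their own Theorem \ref{aa} in the larger class $W_M$ and under \emph{additional} hypotheses. Your outline is therefore a genuinely different route, and the difference is not cosmetic: Theorem \ref{aa1} assumes only (\ref{L}) on the angular momentum, whereas the concentration-compactness argument in Section 3 needs the extra structural conditions (\ref{L1}) and (\ref{L2'}). Your exclusion of dichotomy rests on strict subadditivity of $M\mapsto I_M$, which in this problem is extracted from the scaling inequality $I_{\bar M}\ge(\bar M/M)^{5/3}I_M$ (Lemma \ref{lem4.4}, which uses $L(am)\ge a^{4/3}L(m)$) together with the splitting inequality (\ref{020}) for the rotational term (which uses $L'\ge 0$). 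Under (\ref{L}) alone the rotational energy need not behave monotonically under dilation or under splitting a density into pieces, so the ``careful scaling argument'' you invoke has nothing to run on; this is precisely why \cite{AB} proceed by a different construction. As written, your proof establishes a weaker theorem than the one stated.

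There are also gaps internal to your scheme. Ruling out vanishing and dichotomy yields tightness and weak $L^{4/3}$ convergence of $\rho_n$, plus strong $L^2$ convergence of $\nabla B\rho_n$ (Lemma \ref{lem4.6}); it does \emph{not} yield strong $L^1\cap L^{4/3}$ convergence of the densities themselves, so the gravitational term must be passed to the limit through the compactness of $\rho\mapsto\nabla B\rho$, and the rotational term through a separate lower-semicontinuity argument as in (\ref{rx20}) --- ``dominated convergence'' does not apply to an integrand containing the weakly convergent factor $\rho_n$. Your argument for the sign of the multiplier only gives $\mu\le 0$: letting $|x|\to\infty$ in the inequality on $\RR^3\setminus G$ shows the left side tends to $0$, hence $0\ge\mu$, but excluding $\mu=0$ (which is what actually forces $\bar G$ compact, since then $A'(\hat\rho)\le B\hat\rho+\mu<0$ for large $|x|$ would contradict $A'\ge0$) requires a separate decay/integrability argument for $\hat\rho$. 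Finally, a small inversion of roles: $M<M_c$ is what makes $F$ bounded below and coercive; the negativity $\inf F<0$ does not need it, since dilating any bounded, compactly supported admissible density by $\rho_b(x)=b^3\rho(bx)$ with $b\to0$ makes the internal energy $o(b)$ (using $p(\rho)=o(\rho^{4/3})$ at $\rho=0$) while the gravitational term is exactly $O(b)$ and negative.
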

\begin{rem} When $0<\mathfrak{K}<\infty$, the constant $0<M_c<+\infty$ in (\ref{criticalmass}) is called critical mass. The critical mass was first found by Chandrasekhar (cf.\cite{ch}) in the study of non-rotating white dwarf stars.
When $0<\mathfrak{K}<\infty$, it was proved by  Friedman and  Turkington (\cite{FT2}) that,  if the angular momentum satisfies the following
condition
\be J\in C^1([0, M]),\  J'(m)\ge 0, {~\rm for~} 0\le m\le M,
 J(0)=0,\  J(m)>0 {~\rm for~} 0<m\le M,\ee
where $J$ is the angular momentum, then the condition (\ref{criticalmass}) can be removed, i.e., the above theorem holds for
any positive total mass $M$.\end{rem}

In this paper, we are interested in the  minimizer of functional $F$ in the {\it larger} class  $W_M$. By the same argument as in \cite{AB}, it is easy to prove the following theorem on the regularity of the minimizer.
\begin{thm}\label{ab} Suppose that the pressure function $p$ satisfies:
\be\label{p1} \lim_{\r\to 0+}\frac{p(\r)}{\r^{6/5}}=0,\  \lim_{\r\to \infty}\frac{p(\r)}{\r^{6/5}}=\infty,\  p'(\r)>0 {~\rm as~} \r>0,
 \ee
 and the angular momentum satisfies (\ref{L}).
Let $\tilde \r$ be a minimizer of the energy functional $F$ in $W_M$ and let \be\label{G1}
\Gamma=\{x\in \RR^3:\  \tilde \r(x)>0\}, \ee then $\tilde \r\in C(\RR^3)\cap C^1(\Gamma)$.
Moreover,
 there exists a constant $\lambda$
such that
\be\label{lambda11}
\begin{cases}
& A'(\tilde \r(x))+\int_{r(x)}^{\infty}L(m_{\tilde \r}(s)s^{-3}ds-B\tilde \r(x)=\lambda, \qquad x\in \Gamma,\\
&\int_{r(x)}^{\infty}L(m_{\tilde \r}(s)s^{-3}ds-B\tilde \r(x)\ge \lambda, \qquad x\in \RR^3-\Gamma.\end{cases}\ee
\end{thm}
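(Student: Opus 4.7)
The plan is to follow the variational argument of Auchmuty--Beals \cite{AB} essentially verbatim, observing that the extra $z$-reflection symmetry defining $W_S$ is used in \cite{AB} only to secure compactness of a minimizing sequence for the existence proof, and plays no role either in deriving the Euler--Lagrange conditions or in the regularity bootstrap. Once a minimizer $\tilde\rho\in W_M$ is assumed in hand, the same local analysis applies and produces (\ref{lambda11}) together with the claimed regularity.

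For the Euler--Lagrange conditions I would take an axi-symmetric test function $\psi$ with $\int\psi=0$ and consider the one-parameter family $\tilde\rho+\varepsilon\psi$; two-sided perturbations are admissible where $\tilde\rho>0$, and only non-negative $\varepsilon\psi$ are admissible where $\tilde\rho=0$. The variations of the first and third terms of $F$ are immediate, giving $\int A'(\tilde\rho)\psi$ and $-\int (B\tilde\rho)\psi$ using the symmetry of the Newton kernel. The variation of the rotational term splits into a direct piece $\frac12\int\frac{L(m_{\tilde\rho}(r(x)))}{r(x)^2}\psi(x)\,dx$ and an indirect piece coming from $\delta m_\rho=m_\psi$. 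Passing to cylindrical coordinates, applying Fubini, and integrating by parts in $r$ (with the boundary term at infinity vanishing thanks to boundedness of $m_{\tilde\rho}$) combines the two pieces into the clean form $\int\psi(x)\int_{r(x)}^\infty L(m_{\tilde\rho}(s))\,s^{-3}\,ds\,dx$. Introducing a Lagrange multiplier $\lambda$ for the mass constraint and exploiting the sign constraint on $\psi$ off $\Gamma$ then yields the equality on $\Gamma$ and the one-sided inequality on $\RR^3\setminus\Gamma$ stated in (\ref{lambda11}).

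The regularity is then bootstrapped from (\ref{lambda11}). The growth assumptions (\ref{p1}) imply $A(\r)\ge c\,\r^{6/5}$ for large $\r$, so $\tilde\rho\in L^1\cap L^{6/5}(\RR^3)$, and the super-$\r^{6/5}$ growth at infinity upgrades this to $\tilde\rho\in L^p_{\mathrm{loc}}$ for some $p>3/2$; standard Newtonian-potential estimates then give that $B\tilde\rho$ is bounded and continuous on $\RR^3$. The angular-momentum integral $\int_{r(x)}^\infty L(m_{\tilde\rho}(s))\,s^{-3}\,ds$ is continuous in $x$ since $L\in C^1$ and $m_{\tilde\rho}$ is monotone and bounded. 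On $\Gamma$, (\ref{lambda11}) therefore reads $A'(\tilde\rho)=B\tilde\rho+\lambda-\int_{r(x)}^\infty L(m_{\tilde\rho}(s))s^{-3}\,ds$, whose right-hand side is continuous; since $A''(\r)=p'(\r)/\r>0$ on $(0,\infty)$, the strictly monotone map $A'$ is invertible and $\tilde\rho\in C(\Gamma)$. A standard consequence of (\ref{lambda11}) together with $\lambda<0$ (inherited from the negativity of $\inf F$ established elsewhere in the paper) is that $\tilde\rho$ has compact support, so continuity extends across $\partial\Gamma$ to give $\tilde\rho\in C(\RR^3)$. With $\tilde\rho$ bounded, continuous and compactly supported, $B\tilde\rho$ is $C^1$ (indeed $C^{1,\alpha}$) and the angular-momentum integral is $C^1$ in $r$, so inverting the $C^1$ map $A'$ finally gives $\tilde\rho\in C^1(\Gamma)$.

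The main technical obstacle I anticipate is verifying the continuity of $B\tilde\rho$ on all of $\RR^3$ solely from the integrability supplied by (\ref{p1}), since $L^{6/5}$ alone is not sufficient for the Newton potential to be continuous; one must extract additional local integrability from the hypothesis $p(\r)/\r^{6/5}\to\infty$ via a near-field/far-field splitting of the convolution. This step is performed in \cite{AB} and, being a purely pointwise statement about the single function $\tilde\rho$, transfers unchanged to the larger class $W_M$.
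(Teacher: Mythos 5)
Your proposal is correct and follows exactly the route the paper itself takes: the paper gives no independent argument for this theorem, simply asserting that it follows ``by the same argument as in \cite{AB},'' and your write-up is a faithful elaboration of that Auchmuty--Beals variational and bootstrap argument, correctly noting that the reflection symmetry of $W_S$ is irrelevant to the Euler--Lagrange and regularity steps. The only cosmetic caution is that this theorem does not itself assert $\lambda<0$ or compact support, so you should be careful not to lean on those beyond what the local continuity argument actually requires.
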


\noindent We call such a minimizer $\tilde \r$ a {\it rotating star} solution with  total mass $M$ and angular momentum
$\sqrt{ L(m)}$.

\section{General Existence and Stability Theorems}
For the angular momentum, besides the condition (\ref{L}), we also assume that it satisfies the following conditions:
\be\label{L1} L(a m)\ge a^{4/3}L(m), \ 0<a\le 1,\ 0\le m\le M,\ee
\be\la{L2'} L'(m)\ge 0,\qquad 0\le m\le M.\ee
Condition (\ref{L2'}) is called the S$\ddot{\rm o}$lberg stability criterion (\cite{Ta}).
\subsection{Compactness of Minimizing Sequence}
In this section, we first establish a compactness result for the minimizing sequences of the functional $F$ . This compactness
result is crucial for the existence and stability analyses.
\begin{thm}\label{aa}  Suppose  that the square of the angular momentum  $L$ satisfies (\ref{L}), (\ref{L1}) and (\ref{L2'}), and the pressure function $p$ satisfies the following conditions
\be\label{whitedwarf1}
 p\in C^1[0, +\infty), \int_0^1\f{p(\r)}{\r^2}d\r<+\infty,  \lim_{\r\to \infty}\frac{p(\r)}{\r^{\gamma}}=K,\ p(\r)\ge 0,  p'(\r)>0  {~\rm for~} \r>0,
 \ee
 where   $0<K<+\infty$  and $\gamma\ge 4/3$.  If\\
 (1) \be\label{negative}  \inf_{\r\in W_M}F(\r)<0,\ee
 and\\
 (2) for $\r\in W_M$, \be\label{us} \int [A(\r)(x)+\frac{1}{2}\f{\rho(x)L(m_{\r}(r(x)))}{r(x)^2}] dx\le C_1 F(\r)+C_2,\ee
for some positive constants $C_1$ and $C_2$,
then the following hold:\\
(a)  If  $\{\r^i\}\subset W_M $ is a minimizing sequence for the functional $F$,
then there exist a sequence of vertical shifts $a_i{\bf e_3}$ ($a_i\in \RR$, ${\bf e_3}=(0, 0, 1)$),  a subsequence of $\{\r^i\}$,  (still labeled  $\{\r^i\}$),  and a function $\tilde \r\in W_M$, such that for any $\epsilon>0$ there exists $R>0$ with
\be\label{2.15}
\int_{|x|\ge R}T\r^i(x)dx\le \epsilon, \quad i\in \mathbb{N},\ee
and
\be\label{2.16} T\r^i(x)\rightharpoonup \tilde \r,\  weakly~in~L^{\gamma}(\RR^3),\ as\  i\to \infty,\ee
where $T\r^i(x):=\r^i(x+a_i{\bf e_3})$.\\
\noindent Moreover\\
(b)
\be\label{2.17}\nabla B (T\rho^i)\to \nabla B(\tilde \r)~ strongly~ in~L^2(\RR^3),\ as\  i\to \infty. \ee
 (c) $\tilde \r$ is a minimizer of $F$ in $W_M$.
\end{thm}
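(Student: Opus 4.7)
The plan is to run Lions' concentration--compactness trichotomy on a vertically shifted minimizing sequence, exploiting axi-symmetry to reduce the translation group in $\RR^3$ to shifts along the $x_3$-axis, and then to combine tightness with weak $L^\gamma$ compactness and strong $L^2$ convergence of the gravitational field to produce a minimizer. First I would extract uniform bounds: hypothesis (\ref{us}) together with $F(\rho^i)\le C$ gives $\int A(\rho^i)\,dx\le C$ and control of the rotational kinetic term. Since $p(\rho)\sim K\rho^\gamma$ with $\gamma\ge 4/3$, and $p\ge 0$ with $\int_0^1 p(s)/s^2\,ds<\infty$, one checks $A(\rho)\ge c\rho^\gamma-C\rho$, which combined with the mass constraint yields a uniform $L^\gamma$ bound. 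The identity $\int|\nabla B\rho|^2\,dx=4\pi\int\rho\,B\rho\,dx$ then supplies a uniform $L^2$ bound on $\nabla B\rho^i$.

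Next I set up the L\'evy concentration function $Q^i(R)=\sup_{x_0\in\RR^3}\int_{B_R(x_0)}\rho^i\,dx$. Axi-symmetry forces the supremum to be realized at some $x_0=a_i\mathbf{e}_3$ on the $x_3$-axis, which defines the vertical shifts, and I work with $T\rho^i(x)=\rho^i(x+a_i\mathbf{e}_3)$. The trichotomy must exclude two scenarios. Vanishing is ruled out because, splitting the Newton kernel into a singular piece and a decaying piece and using the $L^1\cap L^\gamma$ bound, $\int \rho^i B\rho^i\to 0$ whenever $Q^i(R)\to 0$ for every fixed $R$; hypothesis (\ref{us}) would then force $\liminf F(\rho^i)\ge 0$, contradicting (\ref{negative}). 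Dichotomy is ruled out by strict subadditivity of $I(M):=\inf_{W_M}F$, namely $I(M)<I(m)+I(M-m)$ for $0<m<M$, proved by a dilation $\rho_\theta(x)=\theta^3\rho(\theta x)$ that exposes a strict gain in gravitational energy; the scaling hypothesis (\ref{L1}) ensures the rotational kinetic energy respects the same $4/3$-homogeneity as the gravitational term, while $p\sim K\rho^\gamma$ with $\gamma\ge 4/3$ controls $A(\rho)$. This combines the mechanism from \cite{LS1} with Rein's dilation trick \cite{rein1}.

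Compactness then yields (\ref{2.15}) and the weak limit $T\rho^i\rightharpoonup\tilde\rho$ in $L^\gamma$ with $\int\tilde\rho\,dx=M$ and $\tilde\rho$ axi-symmetric (preserved under weak limits of axi-symmetric functions). For the strong convergence (\ref{2.17}), tightness plus the uniform $L^\gamma$ bound allow a kernel splitting together with Hardy--Littlewood--Sobolev to conclude $\int T\rho^i\,B(T\rho^i)\to\int\tilde\rho\,B\tilde\rho$, which via $\int|\nabla B\rho|^2=4\pi\int\rho B\rho$ is equivalent to (\ref{2.17}). Finally, weak lower semicontinuity of $\int A(\rho)\,dx$ (convexity of $A$ follows from $p'>0$) and Fatou applied to the rotational kinetic term, using pointwise a.e.\ convergence of $m_{T\rho^i}(r)$ along a further subsequence together with the monotonicity (\ref{L2'}), give $F(\tilde\rho)\le\liminf F(T\rho^i)=\inf_{W_M}F$, so $\tilde\rho\in W_M$ is a minimizer. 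I expect the hardest step to be the strict subadditivity underpinning the exclusion of dichotomy: the non-power form of $p$ prevents a clean scaling identity, so $A(\rho_\theta)$ must be estimated directly, and the nonlocal integrand $\rho L(m_\rho)/r^2$ must be analyzed carefully under the dilations, with hypothesis (\ref{L1}) serving as the precise substitute for homogeneity that makes the argument go through.
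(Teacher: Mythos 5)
Your overall strategy --- concentration--compactness on a vertically shifted minimizing sequence, non-vanishing ruled out via (\ref{negative}), dichotomy ruled out via a scaling/subadditivity inequality, strong $L^2$ convergence of $\nabla B$ from tightness plus local compactness, and lower semicontinuity from convexity of $A$ --- is the same as the paper's, which follows Rein \cite{rein1} as adapted in \cite{LS1}. However, two steps do not work as you state them. First, it is not true that axi-symmetry forces the L\'evy concentration function to be (nearly) maximized at a point of the $x_3$-axis: a density concentrated in a thin torus of large cylindrical radius $r_1$ has its unit-ball concentration attained far off the axis. What axi-symmetry actually gives (the paper's Lemma \ref{lem4.5}) is that, once non-vanishing is known, i.e. $\int_{B_1(x^i)}\rho^i\,dx\ge\delta_0$, the solid of revolution of $B_1(x^i)$ about the axis carries mass at least $C r(x^i)\delta_0$, forcing $r(x^i)\le M/(C\delta_0)=:r_0$; one then shifts by $a_i=z(x^i)$ and enlarges the ball radius to $r_0+1$. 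Without this torus argument the reduction of the full translation group to vertical shifts --- the whole content of part (a) --- is unjustified. Second, you cannot obtain lower semicontinuity of the rotational term by ``Fatou using pointwise a.e.\ convergence of $m_{T\rho^i}(r)$'': the integrand contains the factor $T\rho^i$ itself, which converges only weakly in $L^{\gamma}$, not pointwise. The paper's route (following \cite{LS1}) is to subtract and add the fixed profile $\int_{r(x)}^{\infty}L(m_{\tilde\rho}(s))s^{-3}ds$, pair it against $T\rho^i-\tilde\rho$ using weak convergence together with the tightness (\ref{2.15}), and control the remainder by the monotonicity (\ref{L2'}) after interchanging the order of integration; this is where the S\"olberg condition is genuinely used.

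A further remark on emphasis: you predict the subadditivity step is hardest because the non-power form of $p$ spoils the scaling identity for $\int A$. In fact the paper's Lemma \ref{lem4.4} uses the amplitude-preserving dilation $\bar\rho(x)=\rho(ax)$ with $a=(M/\bar M)^{1/3}\ge 1$, under which $\int A(\bar\rho)\,dx=a^{-3}\int A(\rho)\,dx$ exactly, whatever $p$ is; only $A\ge 0$ and hypothesis (\ref{L1}) are then needed to conclude $f_{\bar M}\ge(\bar M/M)^{5/3}f_M$, which is the quantitative substitute for strict subadditivity fed into the three-piece radial splitting $\rho=\rho\chi_{|x|\le R_1}+\rho\chi_{R_1<|x|\le R_2}+\rho\chi_{|x|>R_2}$. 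The amplitude-rescaling dilation $\theta^3\rho(\theta x)$ that you propose is precisely the one that encounters the non-power difficulty, and it is needed only in Section 4 to verify hypothesis (\ref{negative}), which in the present theorem is an assumption.
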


Thus $\tilde \r$ is a rotating star solution with  total mass $M$ and angular momentum
$\sqrt {L}$.

\begin{rem} i) The assumption (\ref{negative}) is crucial for our compactness and stability analysis. The physical meaning of this is that
the gravitational energy, the negative part of the energy $F$, should be greater than the positive part, which means the gravitation
   should be strong enough to hold the star together. In section 4, we will verify this assumption. Roughly speaking, in addition to (\ref{whitedwarf1}),
   if we require
   \be\label{origin} \lim_{\r\to 0+}\frac{p(\rho)}{\r^{\gamma_1}}=\alpha,\ee
   for some constants $\gamma_1>4/3$ and $0<\alpha<+\infty$, then (\ref{negative}) holds for the following cases:\\
   (a) when $\gamma=4/3$ (where $\gamma$ is the constant in (\ref{whitedwarf1})),  if the  total mass $M$ is less than
    a ''critical mass'' $M_c$, then (\ref{negative}) holds. This case includes white dwarf stars. For a white dwarf star, $\gamma_1=5/3$. \\
    (b) When $\gamma>4/3$, (\ref{negative}) holds for arbitrary positive total mass $M$. This generalizes our previous result in \cite{LS1} for
    the polytropic stars with $p(\rho)=\rho^{\beta}$, $\beta>4/3$. \\
  It should be noted that (\ref{origin}) does not apply to suppermassive star, i.e. $p(\r)=k\r^{4/3}$. For the supermassive star, in order that
  have that (\ref{negative}) hold, in additional to requiring that the total mass is less than a  ''critical mass'',  we also require that the angular momentum
  (per unit mass) $J$ is not identically zero.

   \noindent ii) Assumption (2) in the above theorem implies that the functional $F$ is bounded below, i.e.,
\be\label{bounded below} \inf_{\r\in W_M} F(\r)>-\infty.\ee
We will verify this assumption in Section 4 (see Theorem 4.1).

 \noindent iii)  The inequality (\ref{2.15}) is crucial for the compactness result (\ref{2.17}). One of the difficulties in the analysis is the
 loss of compactness because we consider the problem in an unbounded space, $\RR^3$. The inequality (\ref{2.15}) means the masses  of the elements in  the minimizing sequence $T\r^i(x)$ ''almost'' concentrate in a ball $B_R(0)$.

 \noindent iv) It is easy to verify  that the functional $F$ is invariant under any vertical shift, i.e., if $\r(\cdot)\in W_M$, then
$\bar \r(x)=:\r(x+a{\bf e_3})\in W_M$ and $F(\bar \r)=F(\r)$ for any $a\in \RR$. Therefore, if $\{\r^i\}$ is a minimizing sequence of $F$ in
$W_M$, then $\{T\r^i\}=:=\r^i(x+a_i{\bf e_3})$  is also a minimizing sequence in $W_M$.
\end{rem}

Theorem \ref{aa} is proved in a sequence of lemmas with some modifications of the arguments in \cite{LS1}. We only  sketch the  proofs of those
  lemmas and Theorem \ref{aa}. Complete details can be followed as in \cite{LS1}. We first give some  inequalities which will be used later.  We begin with   Young's inequality (see \cite{GT}, p. 146.)
 \begin{lem} If $f\in L^p\cap L^r$, $1\le p<q<r\le +\infty$, then
\be\label{young} ||f||_q\le ||f||_p^a||f||_r^{1-a}, \qquad a=\f{q^{-1}-r^{-1}}{p^{-1}-r^{-1}}.\ee \end{lem}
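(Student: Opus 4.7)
The plan is to derive this interpolation inequality as a direct application of H\"older's inequality; it is the standard log-convexity (Lyapunov) bound for $L^q$ norms, and the role of the lemma in the paper is purely as a tool for the subsequent compactness estimates.

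I would first treat the case $1\le p<q<r<+\infty$. Write $|f|^q=|f|^{aq}\cdot |f|^{(1-a)q}$ and apply H\"older's inequality with the conjugate exponent pair $s_1=p/(aq)$ and $s_2=r/((1-a)q)$. The specific value of $a$ in the statement is precisely what forces $1/s_1+1/s_2=1$: the requirement $\frac{aq}{p}+\frac{(1-a)q}{r}=1$ rearranges to $\frac{a}{p}+\frac{1-a}{r}=\frac{1}{q}$, whose unique solution is $a=(q^{-1}-r^{-1})/(p^{-1}-r^{-1})$. H\"older's inequality then gives
\be \int |f|^q\,dx \le \left(\int |f|^p\,dx\right)^{aq/p}\left(\int |f|^r\,dx\right)^{(1-a)q/r}, \ee
and raising to the power $1/q$ produces $||f||_q\le ||f||_p^a||f||_r^{1-a}$, as claimed.

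For the endpoint $r=+\infty$, the formula for $a$ degenerates to $a=p/q$, and no H\"older step is needed: from the pointwise bound $|f|^q\le ||f||_\infty^{q-p}|f|^p$ almost everywhere, integrating and taking $q$-th roots yields $||f||_q\le ||f||_p^{p/q}||f||_\infty^{1-p/q}$, which matches the stated inequality with $a=p/q$ and $1-a=(q-p)/q$.

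No step here poses any real obstacle; the only thing worth keeping track of is the bookkeeping that identifies the conjugate exponents with the given $a$. The lemma is recorded in this form because later in the section one will need to interpolate $L^q$ norms of elements of minimizing sequences, combining the $L^1$ mass constraint with the $L^\gamma$ control coming from the pressure hypothesis (\ref{whitedwarf1}).
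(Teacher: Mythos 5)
Your proof is correct: the H\"older interpolation with conjugate exponents $p/(aq)$ and $r/((1-a)q)$, together with the separate pointwise argument for $r=+\infty$, is exactly the standard argument, and the paper itself offers no proof but simply cites Gilbarg--Trudinger, where this same computation appears. Nothing to add.
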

The following two lemmas are proved in \cite{AB}.
\begin{lem}\label{bf1'} Suppose the function $f\in L^1(\RR^3)\cap L^{q}(\RR^3)$. If $1<q\le 3/2$, then $Bf=:f\ast\f{1}{|x|}$ is in $L^{r}(\RR^3)$ for
$3<r<3q/(3-2q)$, and
\be\label{bf1} || Bf||_r\le C \left(||f||_1^b||f||_q^{1-b}+||f||_1^c||f||_q^{1-c}\right),\ee for some constants $C>0$, $0<b<1$,  and $0<c<1$.
If $q>3/2$, then $Bf(x)$ is a bounded continuous function, and  satisfies (\ref{bf1}) with $r=\infty.$
\end{lem}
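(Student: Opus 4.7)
The plan is to apply Young's convolution inequality after splitting the Newton kernel into a near-field and a far-field piece, and then interpolate. Write $|x|^{-1} = K_0(x) + K_\infty(x)$ with $K_0(x) = |x|^{-1}\chi_{\{|x|\le 1\}}$ and $K_\infty(x) = |x|^{-1}\chi_{\{|x|>1\}}$; a direct computation in spherical coordinates shows $K_0 \in L^{s}(\RR^3)$ exactly for $1 \le s < 3$ and $K_\infty \in L^{s}(\RR^3)$ exactly for $3 < s \le \infty$. Then $Bf = f \ast K_0 + f \ast K_\infty$, and each piece will be estimated by Young's convolution inequality $\|f \ast K\|_r \le \|f\|_p \|K\|_s$, $1 + 1/r = 1/p + 1/s$, followed by an interpolation of $\|f\|_p$ between $\|f\|_1$ and $\|f\|_q$ via the preceding Young interpolation lemma.

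Suppose first that $1 < q \le 3/2$. For the near-field piece, choose $s_0 < 3$ and let $1/p_0 = 1 + 1/r - 1/s_0$; as $s_0 \uparrow 3$, $1/p_0 \downarrow 1/r + 2/3$, and the hypothesis $r < 3q/(3-2q)$ is equivalent to $1/r + 2/3 > 1/q$, so some $s_0 < 3$ gives $p_0 \in (1,q)$. For the far-field piece, choose $s_\infty > 3$ and let $1/p_\infty = 1 + 1/r - 1/s_\infty$; as $s_\infty \downarrow 3$, $1/p_\infty \uparrow 1/r + 2/3$, and $r > 3$ forces $1/r + 2/3 < 1$, so some $s_\infty > 3$ gives $p_\infty \in (1,q)$. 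Young's convolution inequality then yields $\|Bf\|_r \le C(\|f\|_{p_0} + \|f\|_{p_\infty})$, and the interpolation lemma supplies $\|f\|_{p_j} \le \|f\|_1^{a_j}\|f\|_q^{1-a_j}$ with $a_j = (1/p_j - 1/q)/(1 - 1/q) \in (0,1)$; setting $b = a_0$ and $c = a_\infty$ yields (\ref{bf1}).

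When $q > 3/2$, the conjugate exponent $q' = q/(q-1)$ satisfies $q' < 3$, so $K_0 \in L^{q'}$; a pointwise application of H\"older in the near-field, combined with the trivial bound $\|K_\infty\|_\infty = 1$ in the far-field, gives $|Bf(x)| \le \|K_0\|_{q'}\|f\|_q + \|f\|_1$, so $Bf \in L^\infty$. Continuity of $Bf$ follows from a standard mollification argument for the kernel together with this uniform bound, and one more interpolation upgrades the estimate to (\ref{bf1}) with $r = \infty$. The argument is entirely standard; the only point requiring care is verifying that the exponents can be chosen in the open intervals $p_j \in (1,q)$, $s_0 \in (1,3)$, and $s_\infty \in (3,\infty)$ so that the interpolation weights $b,c$ lie strictly in $(0,1)$, and the strict inequality $3 < r < 3q/(3-2q)$ provides the necessary slack on both sides.
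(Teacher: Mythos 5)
Your proof is correct, and the kernel splitting $|x|^{-1}=K_0+K_\infty$ followed by Young's convolution inequality and interpolation of $\|f\|_{p_j}$ between $\|f\|_1$ and $\|f\|_q$ is the standard argument for this estimate. The paper itself gives no proof of this lemma --- it simply cites Auchmuty and Beals --- and your argument is essentially the classical one found there; the exponent bookkeeping (in particular that $r<3q/(3-2q)$ is exactly $1/r+2/3>1/q$, which is what lets both $p_0$ and $p_\infty$ land strictly inside $(1,q)$) checks out.
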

\begin{lem}\label{lem2.2} For any function $f\in L^1(\RR^3)\cap L^{4/3}(\RR^3)$, if $\gamma\ge 4/3$,  then $\nabla Bf\in L^2(\RR^3)$. Moreover,
\be\label{bf2} |\int f(x)Bf(x)dx|=\f{1}{4\pi}||\nabla Bf||_2^2\le C \left(\int|f|^{4/3}(x)dx\right)\left(\int|f|(x)dx\right)^{2/3},\ee for some constant $C$.
\end{lem}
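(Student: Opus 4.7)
The plan is to prove the inequality first by combining the Hardy--Littlewood--Sobolev (HLS) inequality with the interpolation inequality of Lemma 2.1, and then to derive the identity $\int fBf\,dx=\frac{1}{4\pi}\|\nabla Bf\|_2^2$ (which automatically places $\nabla Bf$ in $L^2(\RR^3)$) by approximating $f$ by smooth, compactly supported functions and integrating by parts against the Poisson equation $-\Delta Bf=4\pi f$.

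For the inequality, I would apply HLS in $\RR^3$ with $\lambda=1$ in the diagonal case $p=q=6/5$, obtaining
\[
\Bigl|\int f(x)Bf(x)\,dx\Bigr|\le\iint\frac{|f(x)||f(y)|}{|x-y|}\,dx\,dy\le C\,\|f\|_{6/5}^{\,2}.
\]
Since $1<6/5<4/3$, Lemma 2.1 applied with $p=1$, $q=6/5$, $r=4/3$ gives the interpolation exponent $a=1/3$, so $\|f\|_{6/5}\le\|f\|_1^{1/3}\|f\|_{4/3}^{2/3}$. Squaring produces exactly the claimed bound $C\bigl(\int|f|^{4/3}\,dx\bigr)\bigl(\int|f|\,dx\bigr)^{2/3}$.

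For the identity, I would choose $f_n\in C_c^\infty(\RR^3)$ with $f_n\to f$ in $L^1\cap L^{4/3}$ by standard mollification and truncation. Since each $f_n$ is smooth and compactly supported, $Bf_n$ is a classical Newtonian potential with $Bf_n(x)=O(|x|^{-1})$ and $\nabla Bf_n(x)=O(|x|^{-2})$ at infinity, satisfying $-\Delta Bf_n=4\pi f_n$ classically. Multiplying by $Bf_n$ and integrating by parts (boundary terms vanish by the above decay) yields $\int f_n Bf_n\,dx=\frac{1}{4\pi}\|\nabla Bf_n\|_2^2$. Applying this smooth-level identity to the difference $f_n-f_m$ together with the HLS-interpolation bound from the previous paragraph gives
\[
\|\nabla B(f_n-f_m)\|_2^2\le 4\pi C\,\|f_n-f_m\|_1^{2/3}\|f_n-f_m\|_{4/3}^{4/3}\longrightarrow 0,
\]
so $\{\nabla Bf_n\}$ is Cauchy, hence strongly convergent, in $L^2(\RR^3)$. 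Because $Bf_n\to Bf$ also in, say, $L^6(\RR^3)$ (again by HLS applied to the Riesz potential of order $1$), the strong $L^2$ limit of $\nabla Bf_n$ must coincide with the distributional gradient $\nabla Bf$. Passing to the limit on both sides of the smooth identity then yields $\nabla Bf\in L^2(\RR^3)$ and the desired equality.

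The step I anticipate as the main obstacle is this last passage to the limit: had one only exploited weak compactness of $\{\nabla Bf_n\}$ in $L^2$, one would obtain merely $\|\nabla Bf\|_2^2\le\liminf 4\pi\int f_n Bf_n\,dx$ rather than equality. The key device is to apply the smooth-level identity \emph{to the differences} $f_n-f_m$, which converts the HLS-interpolation bound directly into a Cauchy estimate in $L^2$, giving strong convergence of $\nabla Bf_n$ for free. The absolute value bars in the statement then become automatic, because the identity itself forces $\int fBf\,dx\ge 0$.
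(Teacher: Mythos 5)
Your proof is correct and complete; note, however, that the paper itself does not prove this lemma at all --- it is quoted verbatim from Auchmuty--Beals \cite{AB} --- so there is no in-paper argument to match against. Your route (Hardy--Littlewood--Sobolev with $p=q=6/5$, $\lambda=1$, followed by the interpolation $\|f\|_{6/5}\le\|f\|_1^{1/3}\|f\|_{4/3}^{2/3}$ from Lemma 2.1) lands exactly on the stated right-hand side, and your derivation of the identity $\int fBf\,dx=\frac{1}{4\pi}\|\nabla Bf\|_2^2$ by applying the smooth-level Green's identity to \emph{differences} $f_n-f_m$ is the right device: it upgrades the energy bound to a Cauchy estimate and gives strong $L^2$ convergence of $\nabla Bf_n$, so the identity (not merely an inequality) survives the limit. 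Two small remarks. First, the kernel $|x|^{-1}$ in $\RR^3$ is the Riesz potential of order $2$, not order $1$; the mapping $L^{6/5}\to L^6$ you invoke is still correct, since $1/q=5/6-2/3=1/6$. Second, the classical proof in \cite{AB} avoids HLS altogether: one splits $|x|^{-1}=|x|^{-1}\chi_{\{|x|<R\}}+|x|^{-1}\chi_{\{|x|\ge R\}}$, estimates the near part by Young's convolution inequality against $\|f\|_{4/3}$ and the far part trivially by $R^{-1}\|f\|_1^2$, and optimizes in $R$; this is more elementary but yields the same product of norms. Either approach is acceptable, and yours has the advantage of producing the sharp bilinear continuity needed for the limiting argument in one stroke.
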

We also need the following lemma.
\begin{lem}\label{lem3.1'} Suppose that the pressure function $p$ satisfies (\ref{whitedwarf1}) and (\ref{us}) holds. Let  $\{\r^i\}\subset W_M $ be a minimizing sequence for the functional $F$. Then there exists a constant $C>0$ such that
\be\label{ggg} \int [(\r^i)^{\gamma}(x)+\frac{1}{2}\f{\rho^i(x)L(m_{\r^i}(r(x)))}{r(x)^2}] dx\le C, {~\rm for~all~} i\ge 1,\ee
where $\gamma\ge 4/3$ is the constant in (\ref{whitedwarf1}).
So,  the sequence $\{\r^i\}$ is bounded in $L^{\gamma}(\RR^3)$.\end{lem}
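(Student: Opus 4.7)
The plan is to combine the minimizing-sequence property with the coercivity assumption (\ref{us}) and a pointwise comparison between $A(\rho)$ and $\rho^{\gamma}$ coming from the asymptotic behavior of $p$. The bound on the rotational kinetic energy term drops out directly from (\ref{us}); the real content is the $L^{\gamma}$ bound on $\rho^i$.

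First I would observe that $F(\rho^i)$ is bounded. Since $p \ge 0$ on $[0,\infty)$ the integrand in $A(\rho) = \rho\int_0^\rho p(s)/s^2\,ds$ is nonnegative, and $L \ge 0$ by (\ref{L}), so the left-hand side of (\ref{us}) is nonnegative. Rearranging (\ref{us}) then gives $F(\rho) \ge -C_2/C_1$ for every $\rho \in W_M$, so $\inf_{W_M} F$ is finite and, since $\{\rho^i\}$ is a minimizing sequence, $F(\rho^i)$ is bounded above as well. Plugging this back into (\ref{us}) yields a uniform bound
\[
\int \bigl[A(\rho^i)(x) + \tfrac{1}{2}\tfrac{\rho^i(x)L(m_{\rho^i}(r(x)))}{r(x)^2}\bigr]\,dx \le C_1 F(\rho^i) + C_2 \le C,
\]
which already delivers the rotational kinetic energy estimate required by (\ref{ggg}).

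It remains to turn the bound on $\int A(\rho^i)$ into a bound on $\int (\rho^i)^{\gamma}$. I would prove the pointwise inequality
\[
\rho^{\gamma} \le c_1 A(\rho) + c_2 \rho, \qquad \rho \ge 0,
\]
for some constants $c_1, c_2 > 0$ depending only on $p$. The large-$\rho$ regime is where the hypothesis $\lim_{\rho \to \infty} p(\rho)/\rho^{\gamma} = K \in (0,\infty)$ enters: for $\rho_0$ large enough one has $p(s) \ge (K/2) s^{\gamma}$ for $s \ge \rho_0$, and then
\[
A(\rho) \ge \rho \int_{\rho_0}^{\rho} \tfrac{K}{2} s^{\gamma - 2}\,ds = \tfrac{K}{2(\gamma - 1)}\,\rho\,\bigl(\rho^{\gamma - 1} - \rho_0^{\gamma - 1}\bigr) \ge c\,\rho^{\gamma}
\]
once $\rho$ is large. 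For $\rho$ bounded the elementary inequality $\rho^{\gamma} \le \rho_0^{\gamma - 1}\rho$ (valid because $\gamma \ge 4/3 \ge 1$) absorbs the remaining contribution into the $c_2 \rho$ term. Integrating the pointwise inequality and using the mass constraint $\int \rho^i = M$ gives
\[
\int (\rho^i)^{\gamma}\,dx \le c_1 \int A(\rho^i)\,dx + c_2 M \le C',
\]
which combined with the previous estimate proves (\ref{ggg}) and the $L^{\gamma}$ boundedness.

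The only delicate point is the behavior of $A(\rho)$ near $\rho = 0$: the hypothesis $\int_0^1 p(s)/s^2\,ds < \infty$ only says $A(\rho)/\rho$ is bounded near zero and gives no matching lower bound $A(\rho) \gtrsim \rho^{\gamma}$ there, so a naive $A \gtrsim \rho^{\gamma}$ must fail (for white dwarfs $A(\rho) \sim \rho^{5/3}$ near zero, much smaller than $\rho^{4/3}$). The resolution, as above, is to accept the additive $c_2 \rho$ correction on the low-density set and pay for it with the mass constraint $\int \rho^i\,dx = M$, which is the only bound available near $\rho = 0$. This is the main technical subtlety; everything else is routine once (\ref{us}) and the large-$\rho$ asymptotics in (\ref{whitedwarf1}) are in hand.
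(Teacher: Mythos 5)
Your proposal is correct and follows essentially the same route as the paper: use (\ref{us}) together with the boundedness of $F(\rho^i)$ along a minimizing sequence to control $\int A(\rho^i)$ and the rotational term, then convert this into an $L^{\gamma}$ bound via the splitting $\rho^{\gamma}\le (\rho^*)^{\gamma-1}\rho$ for $\rho<\rho^*$ and $\rho^{\gamma}\le \alpha A(\rho)$ for $\rho\ge\rho^*$ (the paper obtains the latter directly from $\lim_{\rho\to\infty}A(\rho)/\rho^{\gamma}=K/(\gamma-1)$ rather than by integrating $p(s)\ge (K/2)s^{\gamma}$, but this is the same computation), paying for the low-density set with the mass constraint. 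The subtlety you flag about the absence of a lower bound $A(\rho)\gtrsim\rho^{\gamma}$ near $\rho=0$ is exactly why the paper's estimate carries the additive term $(\rho^*)^{\gamma-1}M$.
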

\begin{proof} By (\ref{us}), we know that
\be\label{xx1} \int [A(\r^i)(x)+\frac{1}{2}\f{\rho^i(x)L(m_{\r^i}(r(x)))}{r(x)^2}] dx\le C, {~\rm for~all~} i\ge 1, \ee
for any minimizing sequence $\{\r^i\}\subset W_M $ for the functional $F$, where we have used that $\{F(\r^i)\}$ is bounded from above since it converges to $\inf_{W_M} F$.
It is easy to verify that, by virtue of (\ref{whitedwarf1}) and (\ref{z1}),
\be\label{whitedwarf1'}
  \lim_{\r\to \infty}\frac{A(\r)}{\r^{\g}}=\frac{K}{\gamma-1},\  A(\r)>0 {~\rm for~} \r>0.
 \ee
 Therefore, there exits   a constant $\r^*>0$  such that
 \be \alpha A(\r)\ge \r^{\gamma}, \qquad {\rm for~} \r\ge \r^*,
 \ee
where $\alpha=\frac{2(\gamma-1)}{K}$.
 Hence, for $\r\in W_M$,
 \begin{align}
 \int \r^{\gamma}dx&\le\int_{\r<\r^*}(\r^*)^{\gamma-1}\r dx+\alpha\int_{\r\ge \r^*}A(\r)dx\notag\\
 &\le (\r^*)^{\gamma-1}M+\alpha\int A(\r)dx.
 \end{align}
 Applying this inequality to $\r^i$, we conclude that the sequence $\{\r^i\}$ is bounded in $L^{\gamma}(\RR^3)$ by using (\ref{xx1}).
 \end{proof}
 For any $M>0$, we let
 \be\label{xx2} f_M=\inf_{\r\in W_M}F(\r).\ee
\begin{lem}\label{lem4.4} If (\ref{L1}) holds, then $f_{\bar M}\ge (\bar M/M)^{5/3}f_M$  for every $M>\bar M>0$ .\end{lem}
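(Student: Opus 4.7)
The plan is to prove the inequality by a rescaling argument. Given any $\rho \in W_{\bar M}$, I will construct an element $\tilde\rho \in W_M$ whose energy satisfies $F(\tilde\rho) \le (M/\bar M)^{5/3} F(\rho)$. Since $F(\tilde\rho) \ge f_M$ automatically, multiplying by the positive factor $(\bar M/M)^{5/3}$ gives $F(\rho) \ge (\bar M/M)^{5/3} F(\tilde\rho) \ge (\bar M/M)^{5/3} f_M$, and taking the infimum over $\rho \in W_{\bar M}$ yields $f_{\bar M} \ge (\bar M/M)^{5/3} f_M$.

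The rescaling is purely spatial: set $\tilde\rho(x) = \rho(bx)$ with $b = (\bar M/M)^{1/3}$, and write $a = \bar M/M < 1$, so $b = a^{1/3}$. Then $\int \tilde\rho\,dx = b^{-3}\bar M = M$, axisymmetry is preserved, and the remaining integrability conditions in the definition of $W_M$ are inherited from $\rho$ via change of variables; hence $\tilde\rho \in W_M$. Straightforward change-of-variables computations give $\int A(\tilde\rho)\,dx = a^{-1}\int A(\rho)\,dy$ for the potential energy and $\int \tilde\rho\, B\tilde\rho\,dx = b^{-5}\int \rho B\rho\,dy = a^{-5/3}\int \rho B\rho\,dy$ for the gravitational energy. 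For the rotational term, since $m_{\tilde\rho}(r) = b^{-3}m_\rho(br) = a^{-1}m_\rho(br)$, the same change of variables yields
$$\int \frac{\tilde\rho(x)\, L(m_{\tilde\rho}(r(x)))}{r(x)^2}\,dx \;=\; b^{-1}\int \frac{\rho(y)\, L(a^{-1}m_\rho(r(y)))}{r(y)^2}\,dy.$$

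The key nontrivial step is controlling the factor $L(a^{-1}m_\rho(r))$, which is where hypothesis (\ref{L1}) enters. Applying (\ref{L1}) to the argument $m' := a^{-1}m_\rho(r(y)) \in [0, M]$ with scaling parameter $a$ gives $L(m_\rho(r(y))) = L(am') \ge a^{4/3}L(m')$, i.e.\ $L(a^{-1}m_\rho(r(y))) \le a^{-4/3}L(m_\rho(r(y)))$. Combined with $b^{-1} = a^{-1/3}$, the rotational integral for $\tilde\rho$ is bounded above by $a^{-5/3}$ times its value for $\rho$. Since $A(\rho) \ge 0$ by $p \ge 0$, and since $a < 1$ forces $a^{-1} \le a^{-5/3}$, the potential-energy coefficient can also be enlarged from $a^{-1}$ to $a^{-5/3}$. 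Summing the three estimates yields $F(\tilde\rho) \le a^{-5/3}F(\rho)$, equivalent to the bound announced in the first paragraph.

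The main subtlety to watch is that $A$ possesses no natural homogeneity under the pointwise rescaling, so a priori the three terms of $F$ transform with different powers of $a$; the proof works because the ``wrong'' power $a^{-1}$ on the $A$-term can be traded up to the dominant $a^{-5/3}$ thanks to $A \ge 0$ together with $a \le 1$. A second subtlety is that (\ref{L1}) is a sub-homogeneity inequality rather than an equality, so it must be inverted carefully to yield the upper bound $L(a^{-1}m) \le a^{-4/3}L(m)$ used above; this is the only place in the proof where (\ref{L1}) is invoked.
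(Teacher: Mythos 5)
Your proof is correct and follows essentially the same scaling argument as the paper: the paper maps $\rho\in W_M$ to $\bar\rho(x)=\rho(ax)\in W_{\bar M}$ with $a=(M/\bar M)^{1/3}>1$ and proves $F(\bar\rho)\ge a^{-5}F(\rho)$, which is exactly your inequality read through the inverse of the (bijective) rescaling. Your handling of the three terms — exact homogeneity of the gravitational term, trading the $A$-term's power up using $A\ge 0$, and inverting (\ref{L1}) to bound $L(a^{-1}m)\le a^{-4/3}L(m)$ — matches the paper's use of (\ref{L1}) and the sign considerations there.
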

\begin{proof}

The proof follows from a scaling argument as in \cite{rein1} and \cite{LS1}. Take $a=(M/\bar M)^{1/3}$ and let $\bar \r(x)=\r(bx)$ for any
$\r\in W_M$. It is easy to verify that $\bar \r \in W_{\bar M}$.
Moreover, for $r\ge 0$, it is easy to verify, (as in \cite{LS1}) that
\be
m_{\b \r}(r)=\f{1}{a^3}m_{\r}(ar).\ee
Since $L$ satisfies (\ref{L1}) and $a> 1$,  we have
\be\label{L2}
L(m_{\bar \r}(r))\ge \f{1}{a^4} L(m_\r(br)).\ee
Thus, as in \cite{LS1}, we can show that
\be\label{012'}
\int\f{\b\r(x) L(m_{\b \r}(r(x)))}{r(x)^2}dx
=\f{1}{a^5}\int\f{\r(x) L(m_{\b \r}(r(x)))}{r(x)^2}dx.\ee
Therefore, since $a\ge 1$,  it follows from (\ref{L2}) and (\ref{012'}) that
\begin{align} F(\b\r)&\ge a^{-3}\int A(\r)dx-\f{a^{-5}}{2}\int \r B\r dx+\f{a^{-5}}{2}\int\f{\r(x) L(m_{\b \r}(r(x)))}{r(x)^2}dx\notag\\
&\ge a^{-5}\left(\int A(\r)dx-\f{1}{2}\int \r B\r dx+\f{1}{2}\int\f{\r(x) L(m_{\b \r}(r(x)))}{r(x)^2}dx\right)\notag\\
&=(\b M/M)^{5/3} F(\r).\end{align}
Since $\r\to\b\r$ is one-to-one between $W_M$ and $W_{\b M}$, this proves the lemma.

\end{proof}

\begin{lem}\label{lem4.5}  Let $\{\r^i\}\subset W_M $ be a minimizing sequence for $F$. Then there exist constants
$r_0>0$, $\delta_0>0$,  $i_0\in \mathbf{N}$ and $x^i\in \RR^3$ with $r(x^i)\le r_0$,  such that
\be\la{keynote}
\int_{B_1(x^i)}\r^i(x)dx\ge \delta_0, \ i\ge i_0.
\ee\end{lem}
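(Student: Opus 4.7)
The plan is to combine three ingredients: the negativity of $f_M$ (assumption (\ref{negative})) to produce a positive lower bound on the gravitational self-interaction $\int \r^i B\r^i$; a three-scale decomposition of that integral using the uniform $L^\g$ bound of Lemma \ref{lem3.1'} to concentrate mass in a fixed-radius ball; and the axi-symmetry of $\r^i\in W_M$ to control the cylindrical radius of the concentration point. Since $A(\r)\ge 0$ and the rotational kinetic term is nonnegative by (\ref{L}), the elementary inequality $F(\r^i)\ge -\tfrac{1}{2}\int\r^i B\r^i\,dx$ together with $F(\r^i)\to f_M<0$ yields, for all $i\ge i_1$ large enough,
\be\label{prop:low}
\int \r^i B\r^i\,dx\ge |f_M|>0.
\ee

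For parameters $0<\eta<R$ to be chosen, I would split the self-interaction as $I_1+I_2+I_3$ over the regions $|x-y|<\eta$, $\eta\le|x-y|<R$, $|x-y|\ge R$ respectively. Plainly $I_3\le M^2/R$ and $I_2\le (M/\eta)\sup_x\int_{B_R(x)}\r^i(y)\,dy$. For the near-diagonal term $I_1$, Young's inequality applied to $\r^i*(|x|^{-1}\mathbf{1}_{|x|<\eta})$ together with the $L^1\cap L^\g$ bound of Lemma \ref{lem3.1'} yields $I_1\le C\eta^\alpha$ for some $\alpha=\alpha(\g)>0$; for instance, when $\g=4/3$ one uses $|x|^{-1}\mathbf{1}_{|x|<\eta}\in L^2(\RR^3)$ with norm of order $\eta^{1/2}$, giving $\alpha=1/2$. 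Choosing $R$ large and then $\eta$ small so that $I_1+I_3\le |f_M|/2$, (\ref{prop:low}) forces
\be
\sup_{x\in\RR^3}\int_{B_R(x)}\r^i(y)\,dy\ge \delta_1:=\frac{\eta|f_M|}{2M},\qquad i\ge i_0,
\ee
for some $i_0\ge i_1$.

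Covering $B_R(0)$ by $N=N(R)$ unit balls then produces, by translation, a center $x^i\in\RR^3$ with $\int_{B_1(x^i)}\r^i\ge \delta_0:=\delta_1/N$. To bound the cylindrical radius $r(x^i)$ I would invoke axi-symmetry: when $r(x^i)\ge 1$, the rotated centers $R_{\theta_k}x^i$ with angular spacing $\Delta\theta$ of order $1/r(x^i)$ furnish of order $\pi r(x^i)$ pairwise disjoint unit balls $B_1(R_{\theta_k}x^i)$, each (by axi-symmetry of $\r^i$) carrying $\r^i$-mass at least $\delta_0$. Summing and using $\|\r^i\|_1=M$ forces $r(x^i)\le r_0:=\max\{1,M/(\pi\delta_0)\}$.

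The main obstacle is the near-diagonal estimate for $I_1$ in the borderline case $\g=4/3$, where the natural H\"older bound via the $L^\g$ norm diverges (since $\g'=4>3$) and one must instead deploy the Young-type bound with the truncated kernel as sketched above; this is precisely where the hypothesis $\g\ge 4/3$ and the mass bound $\|\r^i\|_1=M$ both enter decisively. The remaining ingredients are essentially bookkeeping, provided the dependencies of $\eta$, $R$, and $N$ on the parameters are tracked.
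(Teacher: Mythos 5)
Your argument is correct, and its skeleton is the same as the paper's: the bound $F(\r)\ge -\tfrac12\int\r B\r\,dx$ plus $f_M<0$ gives a positive lower bound on the self-interaction, a three-region splitting of the double integral isolates a term that forces concentration, and axi-symmetry converts a mass lower bound in one unit ball into disjoint rotated copies (the paper phrases this as mass in the torus of revolution) to bound $r(x^i)$. The one step where you genuinely diverge is the near-diagonal term. The paper fixes the inner radius at $1$, defines $\delta_i=\sup_x\int_{B_1(x)}\r^i$, and bounds \emph{both} the inner term ($D_1\le C(\delta_i^b+\delta_i^c)$, via the Auchmuty--Beals potential estimate, Lemma 3.2) and the middle shell ($D_2\le CM\delta_i r^3$) in terms of $\delta_i$ itself, deriving a contradiction with $\int\r^iB\r^i\ge -f_M$ if $\delta_i$ were small; since $\delta_i$ lives at unit scale, no further covering is needed. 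You instead introduce a second small scale $\eta$, kill the inner term uniformly by Young's convolution inequality with the truncated kernel $|x|^{-1}\mathbf{1}_{|x|<\eta}\in L^2$ (which indeed handles the borderline $\g=4/3$, where the direct H\"older pairing fails), extract concentration from the middle annulus at scale $R$, and then cover $B_R$ by unit balls. Your route is more self-contained (it does not invoke Lemma 3.2) at the cost of the extra covering step and one more parameter to track; the paper's is shorter given that its potential estimate is already in hand. Both are sound.
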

\begin{proof}
First, since $\lim_{i\to\infty}F(\r^i)\to f_M$ and $f_M<0$ (see (3.4)), for large $i$,
\be\label{y1}
-\f{f_M}{2}\le -F(\r^i)\le \f{1}{2}\int \r^iB\r^idx.\ee
For any  $i$, let
\be \delta_i=\sup_{x\in \RR^3}\int_{|y-x|<1}\rho^i(y)dy.\ee
Now \begin{align}\label{y2'}
&\int \r^iB\r^i(x)dx\\
&=\int_{\RR^3}\r^i(x)\{\int_{|y-x|<1}+ \int_{1<|y-x|<r}+\int_{|y-x|>r}\}\f{\r^i(y)}{|y-x|}dydx\notag\\
&=:D_1+D_2+D_3,\end{align}
and $D_3\le M^2r^{-1}$. The shell $1<|y-x|<r$ can be covered by at most $ Cr^3$ balls of radius 1, so $D_2\le C M \delta_ir^3$.
By using H${\rm \ddot{o}}$lder's inequality and applying (\ref{bf1}) to the restriction of $\r^i$ to $\{y: |y-x|<1\}$, we get
\begin{align}\label{new111}
D_1&\le \|\r^i\|_{4/3}\|\int_{|y-x|<1}\f{\r^i(y)}{|y-x|}dy\|_4\notag\\
&\le C \|\r^i\|_{4/3}\left(\|\chi_{B_1(x)}\r^i\|_1^b\|\r^i\|_{4/3}^{1-b}+\|\chi_{B_1(x)}\r^i\|_1^c\|\r^i\|_{4/3}^{1-c}\right)\notag\\
&\le C \|\r^i\|_{4/3}\left(\delta_i^b\|\r^i\|_{4/3}^{1-b}+\delta_i^c\|\r^i\|_{4/3}^{1-c}\right),\end{align}
where  $0<b<1$ and $0<c<1$.  Now since  $\{\|\r^i\|_{\g}\}$ is bounded, it follows that $\{\|\r^i\|_{4/3}\}$ is bounded
due to the fact  $\g\ge 4/3$ in view of (\ref{young}) and  $\|\r^i\|_1=M$;
 this gives $D_1\le C(\delta_i^b+\delta_i^c)$. It follows that we could choose $r$ so large that the above estimates give $\int \r^iB\r^i(x)dx<-f_M$ {\it if $\delta_i$ were small enough}. This would contradict
(\ref{y1}). So there exists $\delta_0>0$ such that $\delta_i\ge \delta_0$ for large $i$. Thus, as $i$ is large, there
exists $x^i\in \RR^3$ and $i_0\in \mathbb{N}$ such that  \be\la{keynote1}
\int_{B_1(x^i)}\r^i(x)dx\ge \delta_0, \ i\ge i_0.
\ee   We now prove that there exists $r_0>0$ independent of $i$ such that  $x^i$ must satisfy
$r(x^i)\le r_0$ for $i$ large. Namely, since $\r^i$ has mass at least $\delta_0$ in the unit ball centered at $x^i$, and is
axially symmetric, it has mass $\ge Cr(x^i)\delta_0$ in the torus obtained by revolving this ball around $x_3$-axis (or $z$-axis).Therefore $r(x^i)\le (C\delta_0)^{-1}M.$
\end{proof}
In order to prove Theorem \ref{aa}, we will need the following lemma.
\begin{lem}\label{lem4.6} Let $\{f^i\}$ be a bounded sequence in $L^{\gamma}(\RR^3)$ ($\gamma\ge 4/3$) and suppose
$$f^i\rightharpoonup f^0~~ weakly~in~ L^{\gamma}(\RR^3).$$ Then\\
(a) For any $R>0$,
$$\nabla B(\chi_{B_R(0)}f^i)\to \nabla B(\chi_{B_R(0)}f^0) ~~strongly~in ~ L^2(\RR^3),$$ where $\chi$ is the indicator function.\\
(b) If in addition $\{f^i\}$ is bounded in $L^1(\RR^3)$, $f^0\in L^1(\RR^3)$, and for any $\epsilon>0$ there exist $R>0$ and $i_0\in \mathbf N$ such that
\be\la{y3}\int_{|x|>R}|f^i(x)|dx<\epsilon,\qquad i\ge i_0,\ee
then
 $$\nabla Bf^i\to \nabla Bf^0 ~strongly~in ~ L^2(\RR^3).$$
\end{lem}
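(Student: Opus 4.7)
The plan is to exploit the identity $\|\nabla Bf\|_2^2 = 4\pi \int f\,Bf$ from Lemma \ref{lem2.2}, which reduces strong $L^2$ convergence of the gradients to the convergence of a bilinear pairing, and then to obtain that pairing convergence from elliptic regularity combined with Rellich--Kondrachov compactness.

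For part (a), I would set $h^i := \chi_{B_R(0)}(f^i - f^0)$. Then $h^i \rightharpoonup 0$ weakly in $L^\gamma(\RR^3)$ and all $h^i$ are supported in the fixed ball $B_R(0)$, so by the identity above it suffices to show $\int h^i Bh^i\,dx \to 0$. I would estimate $|\int h^i Bh^i| \le \|h^i\|_{L^\gamma(B_R)} \|Bh^i\|_{L^{\gamma'}(B_R)}$ by H\"older, reducing matters to strong convergence $Bh^i \to 0$ in $L^{\gamma'}(B_R)$. The potentials $\phi^i := Bh^i$ satisfy $-\Delta \phi^i = 4\pi h^i$. Since $h^i$ is supported in $B_R$ and $\gamma \ge 4/3$, H\"older gives a uniform $L^{4/3}(\RR^3)$ bound on $h^i$; Hardy--Littlewood--Sobolev then produces a uniform $L^{12}(\RR^3)$ bound on $\phi^i$, and interior elliptic regularity yields a uniform $W^{2,\gamma}(B_{2R})$ bound. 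Rellich--Kondrachov extracts a subsequence converging strongly in $L^{\gamma'}(B_R)$, and I would identify the limit as zero by testing against $\psi \in C_c^\infty(\RR^3)$: self-adjointness of $B$ gives $\int \phi^i \psi = \int h^i (B\psi)$, and since $\chi_{B_R} B\psi \in L^{\gamma'}(\RR^3)$, weak convergence of $h^i$ forces the right-hand side to $0$. Uniqueness of the limit then upgrades the subsequence to the full sequence, giving the desired strong $L^2$ convergence of $\nabla B(\chi_{B_R(0)} f^i)$.

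For part (b), I would use the tightness assumption (\ref{y3}) to reduce to part (a). Given $\epsilon > 0$, choose $R$ so large that both $\int_{|x|>R}|f^i|\,dx < \epsilon$ for $i \ge i_0$ and $\int_{|x|>R}|f^0|\,dx < \epsilon$, and decompose $f^i - f^0 = u^i + w^i$ with $u^i := \chi_{B_R(0)}(f^i - f^0)$ and $w^i$ the remaining tail. Part (a) yields $\nabla B u^i \to 0$ strongly in $L^2$, while Lemma \ref{lem2.2} applied to $w^i$ gives
\[
\|\nabla B w^i\|_2^2 \le C\Big(\int|w^i|^{4/3}\,dx\Big)\Big(\int|w^i|\,dx\Big)^{2/3} \le C'\epsilon^{2/3},
\]
since $\|w^i\|_{4/3}$ is uniformly controlled by the $L^1$ and $L^\gamma$ bounds on $f^i$ and $f^0$ (via H\"older or interpolation, valid because $\gamma \ge 4/3$) while $\|w^i\|_1 \le 2\epsilon$ by tightness. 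Letting $i \to \infty$ first and then $\epsilon \to 0$ gives $\nabla B f^i \to \nabla B f^0$ strongly in $L^2(\RR^3)$.

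The main obstacle is the borderline case $\gamma = 4/3$ in part (a): the dual exponent $\gamma' = 4$ exceeds the first-order Sobolev conjugate $12/5$ of $\gamma$, so a $W^{1,\gamma}$ bound on $\phi^i$ does not yield the $L^{\gamma'}$ compactness needed. This forces the use of second-order elliptic regularity together with the compact embedding $W^{2,4/3}(B_{2R}) \hookrightarrow \hookrightarrow L^4(B_R)$, which is available because the second-order Sobolev exponent of $4/3$ in three dimensions is $12$, well above $\gamma' = 4$. This is precisely where the hypothesis $\gamma \ge 4/3$ is exploited.
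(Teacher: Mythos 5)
Your proof is correct and follows essentially the same route as the paper, which simply cites the proof of Lemma 3.7 in Rein's paper together with the observation that the localized map $\rho\mapsto \chi_{B_R(0)}\nabla B(\chi_{B_R(0)}\rho)$ is compact on $L^{\gamma}(\RR^3)$ for $\gamma\ge 4/3$. Your part (a) is precisely a detailed verification of that compactness (via the identity $\|\nabla Bh\|_2^2=4\pi\int hBh$, second-order elliptic regularity, and Rellich--Kondrachov), and your part (b) handles the tails with the interpolation estimate of Lemma \ref{lem2.2} exactly as in Rein's argument.
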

\begin{proof}
This lemma follows easily from the proof of Lemma 3.7 in  \cite{rein1}, due to the following observation:\\
The map: $\r\in L^{\gamma}(\RR^3)\mapsto I_{B_R(0)}\nabla B(I_{B_R(0)}\r)$ is compact for any $R>0$, if $\gamma\ge 4/3$.
\end{proof}
 With above lemmas, the proof of Theorem \ref{aa} is similar to that in \cite{LS1}. So we only outline the main steps.
\vskip  0.2cm
\noindent {\it Proof of Theorem \ref{aa}}
\vskip 0.2cm
\noindent\underline{Step 1}.
We begin with a splitting as in \cite{rein1}.
For $\r\in W_M$, for any $0<R_1<R_2$, we have
\be\label{017}
\r=\r\chi_{|x|\le R_1}+\r\chi_{R_1<|x|\le R_2}+\r\chi_{|x|>R_2} =:\r_1+\r_2+\r_3,\ee where $\chi$ is the indicator function.
It is easy to verify that
\begin{align}\la{020} \int\f{\r(x)L(m_{\r}(r(x))}{r^2(x)}dx&=\sum_{j=1}^3\int\f{\r_j(x)L(m_{\r_j}(r(x))}{r^2(x)}dx\notag\\
&+\sum_{j=1}^3\int\f{\r_j(x)(L(m_{\r}(r(x))-L(m_{\r_j}(r(x))}{r^2(x)}dx,\notag\\
&\ge \sum_{j=1}^3\int\f{\r_j(x)L(m_{\r_j}(r(x))}{r^2(x)}dx. \end{align}
In the last inequality above, we have used (3.2). So, we have
\be\label{022}
F(\r)\ge \sum_{j=1}^3F(\r_j)-\sum_{1\le i<j\le 3}I_{ij},\ee
where
$$I_{ij}=\int_{\RR^3}\int_{\RR^3} |x-y|^{-1}\r_i(x)\r_j(y)dxdy,\qquad 1\le i<j\le 3.$$
If we choose $R_2>2R_1$ in the splitting (\ref{017}), then
\be\label{xxx1}I_{13}\le \frac{C}{R_2}. \ee
By (\ref{bf1}) and (\ref{bf2}), we have
\begin{align}\label{xxx2}
&I_{12}+I_{23}\notag\\
&=\frac{1}{4\pi}\int\nabla (B\rho_1+ B\r_3)\cdot \nabla B\r_2dx\le C \|\nabla (B\rho_1+ B\r_3)\|_2\|\nabla B\r_2\|_2\notag\\
&\le CM^{1/3} \|\r_1+\r_3\|_{4/3}^{2/3}\|\nabla B\r_2\|2\le C M^{1/3}\|\r\|_{4/3}^{2/3}\|\nabla B\r_2\|_2. \end{align}
Using  Lemma \ref{lem4.4}, (\ref{negative}), (\ref{022}), (\ref{xxx1}) and (\ref{xxx2}), and  following an argument as in the proof of Theorem 3.1 in \cite{rein1}, we can show that
\begin{align}
\label{rx1} &f_M-F(\rho)\notag\\
&\le (1-(\f{M_1}{M})^{5/3}-(\f{M_2}{M})^{5/3}-(\f{M_3}{M})^{5/3})f_M+C(R_2^{-1}+M^{1/3}\|\r\|_{4/3}^{2/3}||\nabla B\r_2||_2)\notag\\
&\le C f_M M_1M_3+C(R_2^{-1}+M^{1/3}\|\r\|_{4/3}^{2/3}||\nabla B\r_2||_2),\end{align}
by choosing $R_2>2R_1$ in the splitting (\ref{017}), where $M_i=\int \r_i(x)dx$ ($i=1, 2, 3$.)
 Let $\{\r^i\}$ be a minimizing sequence of $F$ in $W_M$. By Lemma \ref{lem4.5}, we know that there exists  $i_0 \in \mathbf{N}$ and $\delta_0>0$ independent of $i$ such that
 \be\label{y4}
\int_{a_i{\bf e_3}+B_{R_0(0)}}\r^i(x)dx\ge \delta_0,  \qquad  if~i\ge i_0,\ee
where $a_i=z(x^i)$ and $R_0=r_0+1$, $x^i$ and $r_0$ are those quantities in  Lemma \ref{lem4.5}, ${\bf e_3}=(0, 0,1)$. Having proved (\ref{y4}), we can  follow the argument in the proof of Theorem 3.1 in \cite{rein1} to verify (\ref{y3}) for
$$f^i(x)=T\r^i(x)=:\r^i(\cdot+a_i{\bf e_3})$$ by using (\ref{022}) and (\ref{y4}) and choosing
suitable $R_1$ and $R_2$ in the splitting (\ref{017}). We sketch this as follows. The sequence $T\r^i=:\r^i(\cdot+a_i{\bf e_3})$, $i\ge i_0$, is a minimizing sequence of $F$ in $W_M$ (see Remark 2 after Theorem \ref{aa}). We rewrite (\ref{y4}) as
\be\label{y4'}
\int_{B_{R_0}(0)}T\r^i(x)dx\ge \delta_0, \  i\ge i_0.\ee
 Applying (\ref{rx1}) with $T\r^i$ replacing $\r$,  and noticing that $\{T\r^i\}$ is bounded in $L^\g(\RR^3)$ (see Lemma 3.4) (so $\{\|T\r^i\|_{4/3}$\} is bounded if $\g\ge 4/3$ in view of (\ref{young}) and the fact $\|\r^i\|_1=M$),  we obtain, if $R_2>2R_1$,
\be\la{rx3} -C f_M M^i_{1}M^i_{3}\le C(R_2^{-1}+||\nabla BT\r^i_{2}||_2)+F(T\r^i)-f_M,\ee
where $M^i_{1}=\int T\r^i_1(x)dx=\int_{|x|<R_1}T\r^i(x)dx,$, $M^i_{3}=\int T\r^i_3(x)dx=\int_{|x|>R_2}T\r^i(x)dx$ and $T\r^i_{2}=\chi_{R_1<|x|\le R_2}T\r^i.$ Since $\{T\r^i\}$ is bounded in $L^{\g}(\RR^3)$, there exists a subsequence, still labeled by $\{T\r^i\}$, and a function $\tilde \r\in W_M$
such that $$T\r^i\rightharpoonup \tilde \r{~\rm weakly~ in~} L^{\g}(\RR^3).$$  This proves (\ref{2.16}). By (\ref{y4'}), we know that $M^i_{1}$ in (\ref{rx3}) satisfies $M^i_{1}\ge \delta_0$
for $i\ge i_0$ by choosing $R_1\ge R_0$ where $R_0$ is the constant in (\ref{y4'}). Therefore, by (\ref{rx3}) and the fact that $f_M<0$ (cf. (\ref{negative})) , we have
\be\la{rx4} -C f_M \delta_0 M^i_{3}\le CR_2^{-1}+C||\nabla B\tilde \r_{2}||_2+C||\nabla BT\r^i_{2}-\nabla B\tilde \r_{2}||_2)+F(T\r^i)-f_M,\ee
where  $\tilde \r_{2}=\chi_{|x|>R_2}\tilde \r$.  Given any $\epsilon>0$, by the same
argument as \cite{rein1}, we can increase $R_1>R_0$ such that the second term on the right hand side of (\ref{rx4}) is small, say less than $\epsilon/4$.
Next choose $R_2>2R_1$ such that the first term is small. Now that $R_1$ and $R_2$ are fixed, the third term on the right hand side of (\ref{rx4}) converges to zero by Lemma \ref{lem4.6}(a).  Since $\{T\r^i\}$ is a minimizing sequence of $F$ in $W_M$, we can make $F(T\r^i)-f_M$ small by taking $i$ large.
Therefore, for $i$ sufficiently large, we can make
\be\label{rx5}  M^i_{3}=:\int_{|x|>R_2}T\r^i(x)dx<\epsilon.\ee
This verifies (\ref{y3}) in Lemma 3.7 for $f^i=T\r^i$. By weak convergence we have that for any $\epsilon>0$ there exists $R>0$ such that
$$M-\epsilon\le \int_{B_R(0)}\tilde \r(x)dx\le M,$$
which implies $\tilde \r\in L^1(\RR^3)$ with $\int \tilde \r dx=M$. Therefore,  by Lemma \ref{lem4.6}(b),we have
\be\label{rx6} ||\nabla BT\r^i-\nabla B\tilde \r||_2\to 0, \qquad i\to +\infty.\ee
This proves (\ref{2.17}).    (\ref{2.15}) in Theorem \ref{aa} follows from (\ref{rx5}) by taking $R=R_2$.

 Let  $\{\r^i\}$ be a minimizing sequence of the energy functional  $F$, and let $\tilde \r$ be a weak limit  of $\{T\r^i\}$ in $L^{\gamma}(\RR^3)$.  We will prove that $\tilde \r$ is a minimizer of $F$ in $W_M$; that is
\be\label{2.80}
F(\tilde \r)\le \lim\inf_{i\to \infty} F(T\r^i).\ee
By (\ref{whitedwarf1}), there exist positive constants $C$ and $\r^*$ such that
\be\la{xxxx1} A'(\rho)\le C\rho^{\gamma-1}, for\  \rho\ge \rho^*, \ee
where $\gamma\ge 4/3$ is the constant in (3.3). Since $\tilde \rho\in L^{\gamma}$ and $\int\tilde \rho dx=M$,
we can conclude $A'(\tilde \rho)\in L^{\gamma'}$, where
$L^{\gamma'}$ is the dual space of $L^{\gamma}$, i.e.,  $\gamma'=\frac{\gamma}{\gamma-1}$.
In view of (2.5) and (3.3), we have
\be\label{xx3}
A''(\r)=p'(\r)/\r>0, \qquad {\rm for~} \r>0,  \ee
so that
\be\label{xxxx3}\int A(T\rho^i)dx\ge\int A(\tilde \rho)dx+\int A'(\tilde \rho)(T\rho^i-\tilde \rho), {\rm~ for~} i\ge 1. \ee
Since $A'(\tilde \rho)\in L^{\gamma'}$ and $T\rho^i$ weakly converges to $\tilde \rho$  in $L^{\gamma}$,
\be\label{xxxx4}\int A'(\tilde \rho)(T\rho^i-\tilde \rho)\to 0, {\rm~as~} i\to +\infty.\ee
Therefore,
\be\label{rx7} \int A(\tilde \r)dx\le \lim\inf_{i\to \infty} \int A(T\r^i)dx.\ee
 Next, following the proof in \cite{LS1}, we can show that
\be\label{rx20}
\lim_{i\to \infty}\inf\int\f{T\r^i(x)L(m_{T\r^i}(r(x))-\tilde \r(x)L(m_{\tilde \r}(r(x))}{r^2(x)}dx\ge 0,
\ee
by showing that the mass function
$$ m_{\tilde \r}(r)=:\int_{\sqrt {x_1^2+x_2^2}\le r} \tilde \r(x) dx$$
 is continuous for $r\ge 0,$
 and using (\ref{2.15}).  Then (\ref{2.80}) follows from (\ref{rx6}), (\ref{rx7}) and (\ref{rx20}).

\subsection{Stability}
In this section, we assume that the pressure function $p$ satisfies
\be\label{whitedwarf2}
 p\in C^1[0, +\infty),\ \lim_{\r\to 0+}\frac{p(\r)}{\r^{6/5}}=0,\  \lim_{\r\to \infty}\frac{p(\r)}{\r^{\gamma}}=K,\   p'(\r)>0  {~\rm for~} \r>0.
 \ee
 where  $0<K<+\infty$ and $\gamma\ge 4/3$ are constants.  It should be noticed that (\ref{whitedwarf2}) implies both (2.15) and (\ref{whitedwarf1}).
We consider the Cauchy problem for
(\ref{1.1}) with the initial data
\be\label{initial}\rho(x,0)=\rho_0(x),\
{\bf v}(x,0)={\bf v}_0(x).\ee
We begin by giving the definition of a weak solution.\\

\noindent {\bf Definition:}  Let $\r {\bf v}={\bf m}$. The triple $(\rho, {\bf m}, \Phi)(x, t)$ ($x\in\RR^3, t\in[0, T])$ $(T>0)$  and  $\Phi$ given by (\ref{phi}),  with $\r\ge 0,$ ${\bf m}$, ${\bf m}\otimes{\bf m}/\r$ and $ \r\nabla\Phi$ being in  $L^1_{loc}( \RR^3\times [0, T])$, is called a  {\it weak solution} of the Cauchy problem  (\ref{1.1}) and (\ref{initial}) on $ \RR^3\times [0, T]$  if for any Lipschitz continuous test  functions $\psi$ and ${\bf \Psi}=(\psi_1, \psi_2, \psi_3)$ with compact supports in $\RR^3\times [0, T]$,\\
  \be \int_0^T\int \left(\rho\psi_t+{\bf m}\cdot \nabla\psi\right)dxdt+\int\rho_0(x)\psi(x,0)dx=0,
 \ee
 and
 \be\label{3.3} \int_0^T\int \left({\bf m}\cdot{\bf \Psi}_t+\f{{\bf m}\otimes{\bf m}}{\rho}\cdot \nabla{\bf \Psi}\right)dxdt+\int{\bf m}_0(x){\bf \Psi}(x,0)dx=\int_0^T
 \int\rho\nabla \Phi{\bf \Psi} dxdt,
 \ee both hold.

\vskip 0.2cm

For any weak solution, it is easy to verify that the total mass is conserved by using a generalized divergence theorem for
 $L^{r}$ functions ($r\ge 1$) (cf. \cite{chenfrid2}),
 \be\label{5.1}
 \int\r(x, t)dx=\int \r(x, 0)dx,\qquad t\ge 0.\ee
The {\it total energy} of system (\ref{1.1}) at time $t$ is
\begin{equation}\label{energy}
E(t)=E(\r(t), {\bf v}(t))=\int\left(A(\r)+\frac{1}{2}\r|{\bf v}|^2\right)(x, t)dx-\frac{1}{8\pi}\int|\nabla \Phi|^2(x, t)dx,\end{equation}
where as before,
\be\label{A}A(\r)=\r\int_0^\r \f{p(s)}{s^2}ds.\ee
For a solution of (\ref{1.1}) without shock waves,  the total energy is conserved,
i.e., $E(t)=E(0)$ ($t\ge0$)(cf. \cite{Ta}). For  solutions with shock waves, the energy should be non-increasing in time,
so that for all $t\ge 0$,
\be\label{denergy} E(t)\le E(0),\ee
due to the entropy conditions, which are motivated by the second law of thermodynamics (cf. \cite{lax} and \cite{smoller}). This was  proved in
\cite{LS1}.

 We consider axi-symmetric initial data, which takes the form
 \begin{align}\la{5.2'}
  &\r_0(x)=\r(r, z),\notag\\
  & {\bf v}_0(x)=v^r_0(r, z){\bf e}_r+v^{\theta}_0(r, z){\bf e}_{\theta}+v^3_0(\r, z){\bf e}_3.
 \end{align}
  Here   $r=\sqrt {x_1^2+x_2^2},\ z=x_3$,  $x=(x_1, x_2,  x_3)\in \RR^3$ (as before), and
  \be {\bf e}_r=(x_1/r, x_2/r,  0)^\mathrm{T},\ {\bf e}_{\theta}=(-x_2/r,  x_1/r,\ 0)^\mathrm{T},\ {\bf e}_3=(0, 0, 1)^\mathrm{T}.\ee
  We seek axi-symmetric  solutions of the form
  \begin{align}\la{5.3'}
  &\r(x, t)=\r(r, z, t),\notag\\
  & {\bf v}(x, t)=v^r(r, z, t){\bf e}_r+v^{\theta}(r, z, t){\bf e}_{\theta}+v^3(r, z, t){\bf e}_3,\\
  &\Phi(x, t)=\Phi(r, z, t)=-B\r(r, z, t),
  \end{align}
 We call a vector field ${\bf u}(x, t)=(u_1, u_2, u_3)(x)$ ($x\in \RR^3$ ) axi-symmetric if it can be written
in the form
$${\bf u}(x)=u^r(r, z){\bf e}_r+u^{\theta}(r, z){\bf e}_{\theta}+u^3(\r, z){\bf e}_3.$$
For the velocity field ${\bf v}=(v_1, v_2, v_3)(x, t)$, we define the angular momentum (per unit mass)  $j(x,t)$ about the $x_3$-axis  at $(x, t)$ ,  $t\ge 0$, by
\be\la{5.3}
j(x, t)=x_1v_2-x_2v_1.\ee
For an axi-symmetric velocity field
\be\la{asv}
{\bf v}(x, t)=v^r(r, z, t){\bf e}_r+v^{\theta}(r, z, t){\bf e}_{\theta}+v^3(\r, z, t){\bf e}_3,\ee
\be\la{comp}
v_1=\f{x_1}{r}v^r-\f{x_2}{r}v^{\theta},\ v_2=\f{x_2}{r}v^r+\f{x_1}{r}v^{\theta}, v_3=v^3,\ee
so that
\be\la{j}j(x, t)= r v^{\theta}(r, z,  t). \ee
In view of ( {\ref{asv}) and (\ref{j}), we have
\be\label{V}
|{\bf v}|^2=|v^r|^2+\f{j^2}{r^2}+|v^3|^2.\ee
Therefore, the total energy at time $t$ can be written as
\begin{align}\label{en}
E(\r(t), {\bf v}(t))
&=\int A(\r)(x, t)dx+\frac{1}{2}\int \f{\r j^2(x,t)}{r^2(x)}dx\notag\\
&-\frac{1}{8\pi}\int|\nabla B\r|^2(x, t)dx+\frac{1}{2}\int \r(|v^r|^2+|v^3|^2)(x, t)dx.\end{align}

There are two important conserved quantities for the Euler-Poisson equations (\ref{1.1}); namely  the total mass and the angular momentum. In order to
describe these, we define $D_t$,  the non-vacuum region at time $t\ge 0$ of the solution by
\be\label{nonvacuum}
D_t=\{x\in \RR^3: \r(x, t)>0\}.
\ee
We will make the following physically reasonable assumptions A1)-A4) on weak solutions of the Cauchy problem (\ref{1.1}) and (\ref{initial})
(A1)-A4) are easily verified for smooth solutions. For general weak solutions, they are motivated by physical considerations, cf.\cite{Ta}).
\vskip 0.2cm
 A1) For any $t\ge0$, there exists a measurable subset $G_t\subset D_t$ with $meas(D_t-G_t)=0$ ($meas$ denotes  Lebsegue measure)  such that,
for any $x\in G_t$, there exists a unique (backwards) particle path $\xi(\tau, x, t)$ for $0\le \tau\le t$ satisfying

\be\label{particlepath}
\pa_{\t}\xi(\t, x, t)={\bf v}(\xi(\t, x, t), \t),\ \xi(t, x, t)=x.\ee

\begin{rem} If ${\bf v}(\cdot,t)\in BV(\RR^3)$ and $div_x {v}(\cdot, t)\in L^{\infty}(\RR^3)$ for $t\ge 0$ ($div_x$ is in the sense of distributions),
it was proved by L. Ambrosio (\cite{Ambrosio}) that A1) is valid. Related results can be found in \cite{lp}. \end{rem}

\vskip 0.2cm

For  $x\in G_t$, we write $$\xi(0, x, t)=\xi_{-t}(x).$$ Also, for $x\in \RR^3$ and $t\ge 0$,  we denote the total mass at time $t$
in the cylinder $\{y\in \RR^3: r(y)\le r(x)\}$ by $m_{\r(t)}(r(x))$, i.e.,
\be\label{mass}
m_{\r(t)}(r(x))=\int_{r(y)\le r(x)}\r(y, t)dy.\ee
For  axi-symmetric motion, we assume

\vskip 0.2cm

A2)
\be\label{mass1}
m_{\r(t)}(r(x))=m_{\r_0}(r(\xi_{-t}(x))), \qquad {\rm for~}  x\in G_t, t\ge 0.\ee
(This means that the mass enclosed within any material volume cannot change as we follow the volume in its motion ( \cite{Ta}, p. 47)).
Moreover, we assume that the angular momentum  is conserved along the particle path:

\vskip 0.2cm

A3) \be\label{angular1}j (x, t)=j( \xi_{-t}(x), 0), \qquad {\rm for~}  x\in G_t, t\ge 0.\ee

\vskip 0.2cm

\noindent Finally, for $L=j^2$,  we need a technical assumption; namely, \\
A4) \be\label{extra1}
\lim_{r\to 0+}\frac{L(m_{\r(t)}(r)+m_{\tilde \r}(r))m_{\sigma(t)}(r)}{r^2}=0,
\ee
for $t\ge 0$, where $\sigma(t)=\r(t)-\tilde \r. $
\begin{rem} (\ref{extra1})  can be understood as follows. For any $\r\in W_M$, we have $\lim_{r\to 0+} m_{\r}(r)=0. $ Therefore $\lim_{r\to 0+}L(m_{\r(t)}(r)+m_{\tilde \r}(r))=L(0)=0,$
so if we define $$\hat \r(s, t)-\hat \tilde \r(s)=\int_{-\infty}^{+\infty} (\r(s,z, t)- \tilde \r(s,z))dz, $$
then if
\be\label{good5} \f{m_{\sigma(t)}(r)}{r^2}=\f{\int_0^r(2\pi s (\hat \r(s, t)-\hat {\tilde \r}(s))ds}{r^2}\in L^{\infty}(0, \delta) \  for\  some \  \delta>0,
\ee
  (\ref{extra1}) will hold.
If $\hat \r(\cdot, t)-\hat \tilde \r(\cdot)\in L^{\infty}(0, \delta)$, then  (\ref{good5}) holds. This can be assured by assuming that
$\r(r, z, t)-\tilde \r(r, z)\in L^{\infty}((0, \delta)\times \RR\times \RR^+)$ and decays  fast enough in the $z$ direction. For example,
when $\r(x, t)-\tilde \r(x)$ has  compact support in $\RR^3$ and $\r(\cdot, t)-\tilde \r(\cdot)\in L^{\infty}(\RR^3)$, then (\ref{extra1}) holds.
\end{rem}

\vskip 0.2cm

We next make some assumptions on the initial data; namely,  we assume that the initial data is such that the initial total mass and
angular momentum are the same as those of the rotating star solution (those two quantities are conserved quantities). Therefore,
we require
\vskip 0.2 cm
I$_1$) \be\label{initial mass}
\int \r_0(x)dx=\int \tilde \r(x)dx=M. \ee
Moreover we assume
\vskip 0.2cm

I$_2$) For the initial angular momentum $j (x, 0)=rv_0^{\theta}(r, z)=: j_0(r, z)$ ($r=\sqrt {x_1^2+x_2^2}$, $z=x_3$ for $x=(x_1, x_2, x_3)$,
we assume
$j(x, 0)$ only depends on the total mass in the cylinder $\{y\in\RR^3, r(y)\le r(x)\}$, i.e. ,
\be\label{ia}
j(x, 0)=j_0\left(m_{\r_0}(r(x))\right).\ee
(This implies that we require that $v_0^{\theta}(r, z)$ only depends on $r$.)\\
Finally, we assume that the initial profile of the angular momentum per unit mass is the same as that of the rotating star solution, i. e.,
\vskip 0.2cm
 I$_3$) \be\label{ia1}
j_0^2(m)=L(m), \qquad 0\le m\le M,\ee
where $L(m)$ is the profile of the square of the angular momentum of the rotating star defined in Section 2.\\
In order to state our stability result, we need some notation.
Let $\lambda$ be the constant in Theorem 2.2, i.e.,
\be\label{lam}
\begin{cases}
& A'(\tilde \r(x))+\int_{r(x)}^{\infty}L(m_{\tilde \r}(s))s^{-3}ds-B\tilde \r(x)=\lambda, \  x\in \Gamma,\\
&\int_{r(x)}^{\infty}L(m_{\tilde \r})(s))s^{-3}ds-B\tilde \r(x)\ge \lambda, \qquad x\in \RR^3-\Gamma,\end{cases}\ee
with $A$ defined in (\ref{A}) and  $\Gamma$ defined in  (2.16).\\

For $\r\in W_M$, we define, \be d(\r, \tilde \r)=\int
[A(\rho)-A(\tilde \r)]
+(\r-\tilde \r)\int_{r(x)}^{\infty}\{\f{L(m_{\tilde \r}(s))}{s^3}ds-\lambda-B\tilde \r\}dx.
\ee
For $x\in \Gamma$, in view of the convexity of the function $A$ (cf. (\ref{xx3})) and (\ref{lam}), we have,
\begin{align}
&(A(\rho)-A(\tilde \r))(x)
+(\int_{r(x)}^{\infty}\f{L(m_{\tilde \r}(s))}{s^3}ds-\lambda-B\tilde \r(x))(\r-\tilde \r)\notag\\
&= (A(\r)-A(\tilde \r)-A'(\tilde \r)(\r-\tilde \r))(x)\ge 0.
\end{align}
For $x\in \RR^3-\Gamma$, $\tilde \r(x)=0$, so we have $A(\tilde \r)(x))=0$. This is because  since $A(0)=0$ due to $p(0)=0$ (cf. (3.3)) and (2.5).
Therefore, by (\ref{lam}), we have, for $\r\in W_M$ and  $x\in \RR^3-\Gamma$,
\begin{align}
&(A(\rho)-A(\tilde \r))(x)
+(\int_{r(x)}^{\infty}\f{L(m_{\tilde \r}(s))}{s^3}ds-\lambda-B\tilde \r(x))(\r-\tilde \r)\notag\\
&= A(\r)\ge 0.
\end{align}
Thus, for $\r\in W_M$,
\be d(\r, \tilde \r)\ge 0. \ee
We also define
\begin{align}\label{d1}d_1(\r, \tilde \r)
&=\f{1}{2}\int\f{\r(x) L(m_{\r}(r(x))-\tilde \r(x) L(m_{\tilde \r}(r(x))}{r^2(x)}dx\notag\\
&-\int \int_{r(x)}^{\infty}s^{-3}L(m_{\tilde \r}(s))ds(\r(x)-\tilde \r(x))dx,
\end{align}
for $\r\in W_M$. We shall show later that $d_1\ge 0$.
Our main stability result in this paper is the following global-in-time stability theorem.
\begin{thm}\label{th5.1} Suppose that the pressure function satisfies (\ref{whitedwarf2}),  and both (\ref{negative}), (\ref{us}) hold.  Let $\tilde \r$ be a minimizer of the functional $F$ in $W_M$, and assume that it  is unique up to a vertical shift.  Assume that I$_1$)- I$_3$), [(\ref{initial mass})-(\ref{ia1})] hold. Moreover,  assume that the angular momentum  of the rotating star solution $\tilde \r$ satisfies  (\ref{L}), (\ref{L1}) and (3.2).  Let $(\r, {\bf v}, \Phi)(x, t)$ be an axi-symmetric weak solution of the Cauchy problem (\ref{1.1}) and (\ref{initial}) satisfying the  assumptions A1)-A4), [(\ref{particlepath})-(\ref{extra1})].   If the total energy $E(t)$ (cf. (\ref{energy})) is non-increasing with respect to $t$,
then for every $\epsilon>0$, there exists a number $\delta>0$ such that if
\begin{align} &d(\r_0, \tilde \r)+\f{1}{8\pi}||\nabla B\r_0-\nabla B\tilde \r||_2^2+ |d_1(\r_0, \tilde \r)|\notag\\
&+\f{1}{2}\int \r_0(x)(|v^r_0|^2+|v^3_0|^2)(x)dx
<\delta,\end{align}
then  there is a vertical shift $a{\bf e_3}$ ($a\in \RR$, ${\bf e_3}=(0, 0, 1)$) such that, for every $t>0$
\begin{align} &d(\r(t), T^a\tilde \r)+\f{1}{8\pi}||\nabla B\r(t)-\nabla BT^a\tilde \r||_2^2+|d_1(\r(t), T^a\tilde \r)|\notag\\
&+\f{1}{2}\int \r(x, t)(|v^r(x, t)|^2+|v^3(x, t)|^2)dx
<\epsilon,
\end{align}
where $T^a\tilde \r(x)=:\tilde \r(x+a{\bf e_3}).$
\end{thm}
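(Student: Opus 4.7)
I would argue by contradiction, using the compactness Theorem~\ref{aa} as the essential nonlinear tool. Suppose the conclusion fails: there exist $\varepsilon_0>0$, initial data $(\rho_0^n,{\bf v}_0^n)$ satisfying I$_1$)--I$_3$) for which the left-hand side of the hypothesis tends to zero, axi-symmetric weak solutions $(\rho^n,{\bf v}^n)$ satisfying A1)--A4), and times $t_n>0$ at which the conclusion is violated for every vertical shift. The algebraic engine of the argument is the static identity
\be\label{plan:lyap}
F(\rho)-F(\tilde\rho)=d(\rho,\tilde\rho)+d_1(\rho,\tilde\rho)-\tfrac{1}{8\pi}\|\nabla B\rho-\nabla B\tilde\rho\|_2^2,\qquad \rho\in W_M,
\ee
which I would derive by expanding $\int\rho B\rho-\int\tilde\rho B\tilde\rho=\int(\rho-\tilde\rho)B(\rho-\tilde\rho)+2\int(\rho-\tilde\rho)B\tilde\rho$, applying Lemma~\ref{lem2.2} to the first summand, and eliminating the cross term $\int(\rho-\tilde\rho)B\tilde\rho$ via the Euler--Lagrange characterization (\ref{lambda11}) together with the mass constraint $\int(\rho-\tilde\rho)dx=0$ (which annihilates the constant $\lambda$); this is exactly how $d$ and $d_1$ are engineered. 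At the same step one checks $d_1\ge 0$ by rewriting, via Fubini,
\be
\int(\rho-\tilde\rho)(x)\int_{r(x)}^{\infty}\frac{L(m_{\tilde\rho}(s))}{s^3}ds\,dx=\int_0^\infty\frac{L(m_{\tilde\rho}(s))}{s^3}\bigl[m_\rho(s)-m_{\tilde\rho}(s)\bigr]\,ds,
\ee
and comparing this with the difference of rotational energies using the S\"olberg monotonicity (\ref{L2'}); the vanishing boundary contribution at $r\to 0^+$ is precisely hypothesis A4).

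\textbf{Dynamical step.} Combining I$_2$)--I$_3$) with the Lagrangian laws A2)--A3) gives $j^2(x,t)=j_0^2(m_{\rho_0}(r(\xi_{-t}(x))))=L(m_{\rho(t)}(r(x)))$ for all $t\ge 0$, so the rotational part of the kinetic energy at time $t$ equals $\tfrac12\int\rho(t)L(m_{\rho(t)}(r))/r^2\,dx$ and
\be
E(t)=F(\rho(t))+\tfrac12\int\rho(|v^r|^2+|v^3|^2)(x,t)\,dx.
\ee
Energy non-increase (\ref{denergy}), (\ref{plan:lyap}) applied at $t=0$, and $d,d_1\ge0$ then yield $F(\rho^n(t_n))\le F(\rho_0^n)+\tfrac12\int\rho_0^n(|v_0^{r,n}|^2+|v_0^{3,n}|^2)dx\le F(\tilde\rho)+o(1)$, while the minimizing property of $\tilde\rho$ supplies the matching lower bound. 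Since $\int\rho^n(t_n)dx=M$ by mass conservation, $\{\rho^n(t_n)\}\subset W_M$ is a minimizing sequence for $F$.

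\textbf{Compactness and contradiction.} Theorem~\ref{aa} furnishes vertical shifts $a_n$, a subsequence, and a minimizer $\hat\rho$ with $T^{a_n}\rho^n(t_n)\rightharpoonup\hat\rho$ weakly in $L^\gamma$ and $\nabla B(T^{a_n}\rho^n(t_n))\to\nabla B\hat\rho$ strongly in $L^2$. The assumed uniqueness up to vertical shift identifies $\hat\rho$ with $T^c\tilde\rho$ for some $c$; relabeling $a_n\to a_n-c$ we may take $\hat\rho=\tilde\rho$. Because (\ref{plan:lyap}) is translation invariant and $F(T^{a_n}\rho^n(t_n))=F(\rho^n(t_n))\to F(\tilde\rho)$, the strong $L^2$ convergence of $\nabla B$ forces $d(T^{a_n}\rho^n(t_n),\tilde\rho)+d_1(T^{a_n}\rho^n(t_n),\tilde\rho)\to 0$, and both summands being non-negative, each vanishes. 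The kinetic part dies via $\tfrac12\int\rho^n(|v^{r,n}|^2+|v^{3,n}|^2)(t_n)dx=E(t_n)-F(\rho^n(t_n))\le E(0)-F(\rho^n(t_n))\to 0$. Transferring the shift from $\rho^n(t_n)$ onto $\tilde\rho$ by taking $a=-a_n$ in the conclusion then contradicts the standing assumption.

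\textbf{Main obstacle.} The two genuinely delicate points are (i) transporting the angular-momentum profile, i.e.\ validating $j^2(x,t)=L(m_{\rho(t)}(r(x)))$ for genuinely weak solutions out of A2)--A3) (this is why those Lagrangian laws had to be postulated rather than derived; see Remark 3 and the Ambrosio-type regular Lagrangian flow results it cites), and (ii) the sign $d_1\ge 0$, whose Fubini computation must be pushed uniformly up to the rotation axis, which is precisely the role of A4). Everything else is linear manipulation of the identity (\ref{plan:lyap}) together with a direct appeal to Theorem~\ref{aa}.
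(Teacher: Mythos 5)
Your proposal reproduces the paper's own argument in all essentials: the energy decomposition $E(t)-F(\tilde\r)=d+d_1-\f{1}{8\pi}\|\nabla B\r(t)-\nabla B\tilde\r\|_2^2+\f12\int\r(|v^r|^2+|v^3|^2)dx$ is exactly Lemma \ref{lem5.3} (obtained, as you do, from $j^2(x,t)=L(m_{\r(t)}(r(x)))$ via A1)--A3) and I$_2$)--I$_3$) together with the expansion of the quadratic gravitational term and the Euler--Lagrange relation), the sign $d_1\ge 0$ is Lemma \ref{lem5.2} (the paper packages your Fubini/S\"olberg/A4 computation as convexity of the interpolant $Q(\alpha)$ with $Q(0)=Q'(0)=0$), and the concluding contradiction via the compactness Theorem \ref{aa} plus uniqueness up to vertical shift is precisely the argument the paper invokes from \cite{LS1}. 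The only (shared) looseness is that the contradiction as set up directly yields a shift that may depend on $t$, a point the paper likewise leaves to \cite{LS1}.
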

\begin{rem} As noted in \cite{LS1}, the vertical shift $a{\bf e_3}$ appearing in the theorem is analogous to a similar phenomenon which appears in the study of stability of viscous traveling waves in conservation laws, whereby convergence is to a ``shift`` of the original traveling wave.
\end{rem}
\begin{rem}Without the uniqueness assumption for the minimizer of $F$  in $W_M$, we can have the following type of stability result, as observed
in \cite{Rein} for the non-rotating star solutions. Suppose the assumptions in Theorem \ref{th5.1} hold.  Let $\mathcal{S}_M$ be the set of all minimizers of $F$ in $W_M$ and  $(\r, {\bf v}, \Phi)(x, t)$ be an axi-symmetric weak solution of the Cauchy problem (\ref{1.1}) and (\ref{initial}). If the total energy $E(t)$ is non-increasing with respect to $t$,
then for every $\epsilon>0$, there exists a number $\delta>0$ such that if
\begin{align} &\inf_{\tilde \r\in \mathcal{S}_M}\left[ d(\r_0, \tilde \r)+\f{1}{8\pi}||\nabla B\r_0-\nabla B\tilde \r||_2^2+ |d_1(\r_0, \tilde \r)|\right]\notag\\
&+\f{1}{2}\int \r_0(x)(|v^r_0|^2+|v^3_0|^2)(x)dx
<\delta,\end{align}
then for every $t>0$
\begin{align} &\inf_{\tilde \r\in \mathcal{S}_M}\left[d(\r(t), T^a\tilde \r)+\f{1}{8\pi}||\nabla B\r(t)-\nabla BT^a\tilde \r||_2^2+|d_1(\r(t), T^a\tilde \r)|\right]\notag\\
&+\f{1}{2}\int \r(x, t)(|v^r(x, t)|^2+|v^3(x, t)|^2)(x)dx
<\epsilon.
\end{align}
then  there is a vertical shift $a{\bf e_3}$ ($a\in \RR$, ${\bf e_3}=(0, 0, 1)$) such that, for every $t>0$
\begin{align} &d(\r(t), T^a\tilde \r)+\f{1}{8\pi}||\nabla B\r(t)-\nabla BT^a\tilde \r||_2^2+|d_1(\r(t), T^a\tilde \r)|\notag\\
&+\f{1}{2}\int \r(x, t)(|v^r(x, t)|^2+|v^3(x, t)|^2)dx
<\epsilon,
\end{align}
where $T^a\tilde \r(x)=:\tilde \r(x+a{\bf e_3}).$
 \end{rem}

The proof of  Theorem \ref{th5.1} follows from several lemmas. The proofs of these lemmas are similar to those in \cite{LS1}, and  therefore
we only sketch them.  First we have
\begin{lem}\label{lem5.2}
Suppose the angular momentum of the rotating star solutions satisfies  (\ref{L}), (\ref{L1}) and (3.2).  For any $\r(x)\in W_M$, if
\be\label{extra}
\lim_{r\to 0+}{L(m_\r(r)+m_{\tilde \r}(r))m_{\sigma}(r)}{r^{-2}}=0,
\ee
where $\sigma=\r-\tilde \r,$
then
\be\label{dd1} d_1(\r, \tilde \r)\ge 0,
\ee
where $d_1$ is defined by (\ref{d1}).
\end{lem}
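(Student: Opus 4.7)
The plan is to reduce $d_1(\rho,\tilde\rho)$ to a single radial integral whose integrand is pointwise nonnegative by a convexity argument. I introduce the primitive
\begin{equation*}
\Lambda(m)=\int_0^m L(m')\,dm',
\end{equation*}
so that $\Lambda'=L$; the S\"olberg condition (\ref{L2'}) gives $\Lambda''=L'\ge 0$, i.e.\ $\Lambda$ is convex on $[0,M]$, and also implies monotonicity of $L$, which will be used for the boundary analysis.

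First I would exploit axial symmetry to pass from three-dimensional to radial integrals. Since $m_\rho'(r)=2\pi r\int_{-\infty}^{\infty}\rho(r,z)\,dz$, one has $\int\rho(x)f(r(x))\,dx=\int_0^\infty f(r)\,m_\rho'(r)\,dr$ for suitable $f$. Applied with $f(r)=L(m_\rho(r))/r^2$, and noting $\frac{d}{dr}\Lambda(m_\rho(r))=L(m_\rho(r))\,m_\rho'(r)$, an integration by parts against the weight $r^{-2}$ (and the analogous one for $\tilde\rho$) should give
\begin{equation*}
\tfrac{1}{2}\int\frac{\rho\,L(m_\rho)-\tilde\rho\,L(m_{\tilde\rho})}{r^2}\,dx
=\Bigl[\tfrac{\Lambda(m_\rho(r))-\Lambda(m_{\tilde\rho}(r))}{2r^2}\Bigr]_0^\infty
+\int_0^\infty r^{-3}\bigl(\Lambda(m_\rho(r))-\Lambda(m_{\tilde\rho}(r))\bigr)\,dr.
\end{equation*}
The boundary term at $r=\infty$ vanishes since $m_\rho(\infty)=m_{\tilde\rho}(\infty)=M$. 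At $r=0^+$, the mean value theorem and monotonicity of $L$ give $|\Lambda(m_\rho(r))-\Lambda(m_{\tilde\rho}(r))|\le L(m_\rho(r)+m_{\tilde\rho}(r))\,|m_\sigma(r)|$, and dividing by $r^2$ we obtain convergence to $0$ by the standing hypothesis (\ref{extra}).

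Next I would apply Fubini to the second piece of $d_1$,
\begin{equation*}
\int(\rho-\tilde\rho)(x)\int_{r(x)}^\infty s^{-3}L(m_{\tilde\rho}(s))\,ds\,dx
=\int_0^\infty r^{-3}L(m_{\tilde\rho}(r))\bigl(m_\rho(r)-m_{\tilde\rho}(r)\bigr)\,dr,
\end{equation*}
and subtract to combine the two reductions into
\begin{equation*}
d_1(\rho,\tilde\rho)=\int_0^\infty r^{-3}\Bigl[\Lambda(m_\rho(r))-\Lambda(m_{\tilde\rho}(r))-\Lambda'(m_{\tilde\rho}(r))\bigl(m_\rho(r)-m_{\tilde\rho}(r)\bigr)\Bigr]\,dr.
\end{equation*}
The bracketed integrand is nonnegative for every $r>0$ by convexity of $\Lambda$ (the tangent at $m_{\tilde\rho}(r)$ lies below the graph), which yields $d_1(\rho,\tilde\rho)\ge 0$.

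The main obstacle is the analysis at $r=0^+$: both $\int_0^{\cdot} s^{-3}L(m_{\tilde\rho}(s))\,ds$ and $\Lambda(m_\rho(r))/r^2$ can diverge individually, and only their combinations (the \emph{differences}) are well behaved. This cancellation is precisely what hypothesis (\ref{extra}) is designed to furnish; once the boundary terms are absorbed into the difference and shown to vanish, the rest is a one-line consequence of the convexity of $\Lambda$, itself a direct consequence of the S\"olberg criterion.
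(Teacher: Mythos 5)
Your proposal is correct, and it takes a genuinely different (and more direct) route than the paper. The paper's proof interpolates: it sets $\sigma=\r-\t\r$, defines $Q(\alpha)$ as the value of the $d_1$-type expression along the segment $\t\r+\alpha\sigma$, shows $Q(0)=Q'(0)=0$ via an interchange of integration and an integration by parts (with the boundary term at $r=0$ killed by (\ref{extra})), and then concludes from $Q''(\alpha)\ge 0$, which follows from the S\"olberg criterion $L'\ge 0$ together with the identity $Q''(\alpha)=\alpha\int_0^{\infty}s^{-3}L'(m_{\t\r+\alpha\sigma}(s))(m_\sigma(s))^2\,ds$. Your argument dispenses with the one-parameter family entirely: by introducing the convex primitive $\Lambda(m)=\int_0^m L$ and integrating by parts in $r$, you exhibit $d_1(\r,\t\r)$ in closed form as
\begin{equation*}
d_1(\r,\t\r)=\int_0^\infty r^{-3}\Bigl[\Lambda(m_\r(r))-\Lambda(m_{\t\r}(r))-\Lambda'(m_{\t\r}(r))\bigl(m_\r(r)-m_{\t\r}(r)\bigr)\Bigr]\,dr,
\end{equation*}
a pointwise-nonnegative Bregman-type divergence of $\Lambda$. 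The two proofs use exactly the same two ingredients in the same roles --- $L'\ge 0$ supplies convexity, and (\ref{extra}) kills the boundary contribution at $r=0^+$ (your mean-value bound $|\Lambda(m_\r)-\Lambda(m_{\t\r})|\le L(m_\r+m_{\t\r})|m_\sigma|$, which uses the monotonicity of $L$, is the right way to match the boundary term to the exact form of hypothesis (\ref{extra})) --- but your version makes the mechanism transparent and avoids the second differentiation in $\alpha$. The only caveat, which you already flag, is that the integration by parts should formally be performed on $[\epsilon,\infty)$ with $\epsilon\to 0$, since the individual pieces $r^{-3}\Lambda(m_\r(r))$ and $\int_{r}^\infty s^{-3}L(m_{\t\r}(s))\,ds$ may diverge separately near the axis; this is the same level of formal manipulation the paper itself performs (it defers those details to \cite{LS1}), so your proof is at no disadvantage there.
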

\begin{proof}  For an axi-symmetric  function $f(x)=f(r, z)$ ($r=\sqrt {x_1^2+x_2^2},\ z=x_3$ for $x=(x_1, x_2, x_3)$),  we let
\be \hat f(r)=2\pi r\int_{-\infty}^{+\infty} f(r, z)dz,\ee

\be\label{xd1} m_f(r)=\int_{\{x: \sqrt{x_1^2+x_2^2}\le r\}}f(x)dx=\int_0^r \hat f(s)ds,\ee
so that
\be\label{dx2}
m'_f(r)=\hat f(r).\ee
In order to show (\ref{dd1}), we let
\be \sigma(x)=(\r-\tilde \r)(x),\ee
and for $0\le \alpha\le 1$, we define
\begin{align}
Q(\alpha)&=\f{1}{2}\int\f{(\tilde \r+\alpha\sigma)(x) L(m_{\tilde \r+\alpha\sigma}(r(x)))-\tilde \r(x) L(m_{\tilde \r}(r(x)))}{r^2(x)}dx\notag\\
&-\alpha\int \int_{r(x)}^{\infty}s^{-3}L(m_{\tilde \r}(s))ds\sigma(x)dx.
\end{align}
 Then
\be\label{5.37} Q(0)=0,\ Q(1)=d_1(\r,\ \tilde \r). \ee
By the assumption that $L'(m)\ge 0$ for $0\le m\le M$ (cf. (3.2)) and (\ref{extra}), we can show that
\be\label{dx8}
Q'(\alpha)=\int_0^{+\infty}\hat \sigma(r)\int_{r}^{\infty}s^{-3}(L(m_{\tilde \r+\alpha\sigma}(s))-L(m_{\tilde \r}(s)))dsdr,\ee
and therefore
\be\label{dx9} Q(0)=Q'(0)=0.\ee
This is done by interchanging the order of integration and integrating  by parts (details can be found in \cite{LS1}).
Differentiating (\ref{dx9}) again and interchanging the order of integration, we get
\be\label{dx101}
\f{d^2Q(\alpha)}{d\alpha^2}=\alpha\int_0^{+\infty}s^{-3}L'(m_{\tilde \r+\alpha\sigma}(s))(m_\sigma(s))^2ds.\ee
Therefore, if $L'(m)\ge 0$ for $0\le m\le M$, then
\be\label{dx11}\f{d^2Q(\alpha)}{d\alpha^2}\ge 0,\  for \ 0\le\alpha\le 1.\ee
This, together with (\ref{dx9})and (\ref{5.37}),  yields $d_1(\r, \tilde \r)=Q(1)\ge 0.$
 \end{proof}

\begin{lem}\label{lem5.3} Let $(\r, {\bf v})$ be a solution of the Cauchy problem (\ref{1.1}), (\ref{initial}) as stated in Theorem 3.2, then
\begin{align}\label{ed}
&E(\r, {\bf v})(t)-F(\tilde \r)\notag\\
&=d(\r(t),\tilde \r)+d_1(\r(t), \tilde \r)
-\f{1}{8\pi}||\nabla B\r(\cdot, t)-\nabla B\tilde \r||_2^2\notag\\
&+\f{1}{2}\int\r (|v^r|^2+|v^3|^2)(x, t)dx.\end{align}\end{lem}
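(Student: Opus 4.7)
The identity is essentially algebraic once the right substitutions are made; the plan is to rewrite the total energy $E(t)$ by perturbing around the steady state $\tilde\rho$ and then recognizing $d$ and $d_1$ in what remains.

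\textbf{Step 1: Rewrite the rotational kinetic energy via angular momentum conservation.} Using the axi-symmetric decomposition (\ref{V}) of the velocity, I split $\frac{1}{2}\int\rho|{\bf v}|^2\,dx$ into $\frac{1}{2}\int\rho(|v^r|^2+|v^3|^2)\,dx$ plus $\frac{1}{2}\int\rho j^2/r^2\,dx$. By I$_2$), A2), A3) and I$_3$),
\[
j(x,t) = j_0(\xi_{-t}(x)) = j_0\!\left(m_{\rho_0}(r(\xi_{-t}(x)))\right) = j_0\!\left(m_{\rho(t)}(r(x))\right),
\]
so $j^2(x,t) = L(m_{\rho(t)}(r(x)))$. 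Thus the rotational part equals $\tfrac{1}{2}\int \rho(x,t) L(m_{\rho(t)}(r(x)))/r^2(x)\,dx$, matching the form used in $F$ and $d_1$.

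\textbf{Step 2: Expand the gravitational energy around $\tilde\rho$.} Write $\rho=\tilde\rho+\sigma$ with $\sigma=\rho(t)-\tilde\rho$. Then
\[
\int|\nabla B\rho|^2\,dx - \int|\nabla B\tilde\rho|^2\,dx = \|\nabla B\rho-\nabla B\tilde\rho\|_2^2 + 2\int\nabla B\tilde\rho\cdot\nabla B\sigma\,dx.
\]
Using $-\Delta Bf = 4\pi f$ (in the sense used in Lemma 3.3 and (\ref{bf2})) and the symmetry of $B$, the cross term equals $8\pi\int \tilde\rho\, B\sigma\,dx = 8\pi\int \sigma\, B\tilde\rho\,dx$. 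Dividing by $-8\pi$ gives
\[
-\tfrac{1}{8\pi}\!\int(|\nabla B\rho|^2-|\nabla B\tilde\rho|^2)\,dx = -\tfrac{1}{8\pi}\|\nabla B\rho - \nabla B\tilde\rho\|_2^2 - \int(\rho-\tilde\rho)B\tilde\rho\,dx.
\]
Combined with $-\tfrac{1}{8\pi}\|\nabla B\tilde\rho\|_2^2 = -\tfrac{1}{2}\int\tilde\rho B\tilde\rho\,dx$, this lets me replace the gravitational piece of $E(t)$ by the gravitational piece of $F(\tilde\rho)$ plus the correct error.

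\textbf{Step 3: Subtract $F(\tilde\rho)$ from $E(t)$ and insert the Lagrange multiplier.} Collecting Steps 1 and 2,
\begin{align*}
E(t) - F(\tilde\rho) &= \int[A(\rho)-A(\tilde\rho)]\,dx + \tfrac{1}{2}\!\int\frac{\rho L(m_\rho) - \tilde\rho L(m_{\tilde\rho})}{r^2}\,dx \\
&\quad -\int(\rho-\tilde\rho)B\tilde\rho\,dx - \tfrac{1}{8\pi}\|\nabla B\rho - \nabla B\tilde\rho\|_2^2 + \tfrac{1}{2}\!\int\rho(|v^r|^2+|v^3|^2)\,dx.
\end{align*}
Now I use mass conservation $\int(\rho(t)-\tilde\rho)\,dx = 0$ (from I$_1$) and (\ref{5.1})) to add and subtract $\int(\rho-\tilde\rho)\bigl[\int_{r(x)}^\infty L(m_{\tilde\rho}(s))s^{-3}ds - \lambda\bigr]\,dx$. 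The term with $\int_r^\infty L(m_{\tilde\rho}(s))s^{-3}ds$ combined with the $L$-difference gives exactly $d_1(\rho(t),\tilde\rho)$, while the remaining piece (including the $-B\tilde\rho$ and $-\lambda$ contributions, together with the $A$-difference) assembles to $d(\rho(t),\tilde\rho)$.

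\textbf{Main obstacle.} The only non-routine point is the rigorous justification of the integration by parts $\int\nabla B\tilde\rho\cdot\nabla B\sigma\,dx = 4\pi\int\tilde\rho B\sigma\,dx$ at the regularity level of a weak solution; this is controlled by Lemmas 3.2 and 3.3, since $\tilde\rho,\rho(t)\in L^1\cap L^\gamma$ with $\gamma\ge 4/3$ (using $\int A(\rho(t))<\infty$ from finite energy together with an estimate as in Lemma 3.4). Everything else is bookkeeping, and the identity in Lemma 3.8 drops out.
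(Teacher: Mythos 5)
Your proposal is correct and follows essentially the same route as the paper: establish $j^2(x,t)=L(m_{\rho(t)}(r(x)))$ from A1)--A3) with I$_2$), I$_3$), expand $\|\nabla B\rho\|_2^2-\|\nabla B\tilde\rho\|_2^2$ quadratically around $\tilde\rho$ so the cross term becomes $8\pi\int(\rho-\tilde\rho)B\tilde\rho\,dx$, and use $\int\rho(t)\,dx=\int\tilde\rho\,dx=M$ to insert the multiplier $\lambda$ and the $\int_{r(x)}^\infty L(m_{\tilde\rho}(s))s^{-3}ds$ term and recognize $d$ and $d_1$. (Your sign for the cross term is the right one; the paper's intermediate display writes $-8\pi$ but its final identity agrees with your computation.)
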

\begin{proof}
From A1)-A3), we can show
\be\label{jj3}
j^2(x, t)=L(m_{\r(t)}(r(x))),\qquad x\in G_t.\ee
Therefore, by (\ref{en}), we have
\begin{align}\label{jj4}
E(\r(t), {\bf v}(t))
&=\int A(\r)(x, t)dx+\frac{1}{2}\int \f{\r(x, t) L(m_{\r(t)}(r(x))}{r^2(x)}dx\notag\\
&-\frac{1}{8\pi}\int|\nabla B\r|^2(x, t)dx+\frac{1}{2}\int \r(|v^r|^2+|v^3|^2)(x, t)dx.\end{align}
(\ref{ed}) follows from (\ref{jj4}) and the follow identities:
\begin{align*}
&(||\nabla B\r(\cdot, t)||_2^2-||\nabla B\tilde \r||_2^2)\notag\\
&=||\nabla (B\r(\cdot, t))-\nabla B\tilde \r)||_2^2+2\int \nabla B\tilde \r(x)\cdot (\nabla B\r(x, t)-\nabla B\tilde \r(x))dx\notag\\
&=||\nabla (B\r(\cdot, t))-\nabla B\tilde \r)||_2^2-8\pi\int  B\tilde \r(x) (\r(x, t)- \tilde \r(x))dx. \end{align*}
and   $$\int \r(x, t)dx=\int\tilde \r(x)dx=M.$$ \end{proof}

\noindent Having established these lemmas, the proof of Theorem 3.2 is the same as the proof of Theorem 3.1 in \cite{LS1}.

\section{Applications to White Dwarf and Supermassive Stars}
In this section, we want to verify the assumptions (\ref{negative}) and (\ref{us}) in Theorem 3.1 for both white dwarfs and supermassive stars.
Once we verify (\ref{negative}) and (\ref{us}), we can apply Theorems 3.1 and 3.2.  We begin with the following theorem which verifies (3.5) for white dwarfs, supermassive stars, and polytropes with $\gamma\ge 4/3$, in both the rotating and non-rotating cases.
\begin{thm} Assume that the pressure function $p$ satisfies (3.3). Then there exists a constant $\mathfrak{M}_c$ satisfying $0<\mathfrak{M}_c<\infty$ if $\gamma=4/3$ and $\mathfrak{M}_c=\infty$ if $\gamma>4/3$,  such that if $M<\mathfrak{M}_c$, then (\ref{us}) holds for $\r\in W_M$.
\end{thm}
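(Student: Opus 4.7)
The plan is to rewrite (\ref{us}) as a one-sided bound on the gravitational self-energy in terms of $\int A(\rho)\,dx$ plus a constant, and then to establish such a bound from (\ref{bf2}) in Lemma \ref{lem2.2}, combined with Young's interpolation (\ref{young}) and the asymptotic behaviour of $A(\rho)$ as $\rho\to\infty$ induced by (\ref{whitedwarf1}).

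First, write $T(\rho)=\frac{1}{2}\frac{\rho L(m_\rho(r))}{r^2}\ge 0$. Since $F(\rho)=\int(A(\rho)+T(\rho))\,dx-\frac{1}{2}\int\rho B\rho\,dx$, the estimate (\ref{us}) with $C_1>1$, $C_2>0$ is algebraically equivalent to
$$\int\rho B\rho\,dx\le\frac{2(C_1-1)}{C_1}\int(A(\rho)+T(\rho))\,dx+\frac{2C_2}{C_1}.$$
Because $T\ge 0$, it is enough to produce a pair $(\theta,K)$ with $\theta\in[0,2)$ and $K=K(M)\ge 0$ such that
$$\int\rho B\rho\,dx\le\theta\int A(\rho)\,dx+K(M)\qquad\hbox{for every }\rho\in W_M,\qquad(\star)$$
from which the theorem follows with $C_1=2/(2-\theta)>1$ and $C_2=K/(2-\theta)$.

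Next I would chain the estimates leading to $(\star)$. Lemma \ref{lem2.2} yields $\int\rho B\rho\,dx\le C\,\|\rho\|_{4/3}^{4/3}\,M^{2/3}$. From (\ref{whitedwarf1}) and the definition (\ref{z1}) a short computation gives $A(\rho)/\rho^{\gamma}\to K/(\gamma-1)$ as $\rho\to\infty$, so there exist $\rho^\ast>0$ and $\alpha=2(\gamma-1)/K$ with $A(\rho)\ge\rho^{\gamma}/\alpha$ for $\rho\ge\rho^\ast$. Splitting $\{\rho<\rho^\ast\}$ and $\{\rho\ge\rho^\ast\}$ as in the proof of Lemma \ref{lem3.1'} I obtain
$$\int\rho^{\gamma}\,dx\le(\rho^\ast)^{\gamma-1}M+\alpha\int A(\rho)\,dx.\qquad(\dagger)$$
If $\gamma=4/3$, the factor $\|\rho\|_{4/3}^{4/3}$ is bounded directly by $(\dagger)$, giving $\int\rho B\rho\,dx\le C\alpha M^{2/3}\int A(\rho)\,dx+CM^{5/3}(\rho^\ast)^{1/3}$. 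The coefficient $C\alpha M^{2/3}$ is $<2$ precisely when $M<(2/(C\alpha))^{3/2}=:\mathfrak{M}_c\in(0,\infty)$. If $\gamma>4/3$, I would apply Young's interpolation (\ref{young}) with $p=1$, $q=4/3$, $r=\gamma$ to obtain
$$\|\rho\|_{4/3}^{4/3}\le M^{4a/3}\|\rho\|_{\gamma}^{4(1-a)/3},\qquad a=\frac{3\gamma-4}{4(\gamma-1)}\in(0,1),$$
where $\beta:=4(1-a)/(3\gamma)<1$. Elementary Young's inequality then gives $\|\rho\|_{\gamma}^{4(1-a)/3}\le\varepsilon\|\rho\|_{\gamma}^{\gamma}+C_\varepsilon$ for every $\varepsilon>0$; combined with $(\dagger)$ this produces
$$\int\rho B\rho\,dx\le\varepsilon C\alpha M^{2/3+4a/3}\int A(\rho)\,dx+\widetilde C(M,\varepsilon),$$
and choosing $\varepsilon$ so small that $\varepsilon C\alpha M^{2/3+4a/3}<2$ -- always possible -- gives $(\star)$ for any $M>0$, so $\mathfrak{M}_c=+\infty$.

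The only delicate point is the borderline case $\gamma=4/3$: there the coefficient of $\int A(\rho)\,dx$ is frozen at $C\alpha M^{2/3}$ with no $\varepsilon$ to play with, and it is this threshold behaviour that produces the Chandrasekhar-type critical mass. For $\gamma>4/3$ the slack exponent $\beta<1$ allows an absorbing argument for arbitrary $M$. The remaining steps are standard H\"older/Young manipulations and bookkeeping of constants; no genuinely new idea beyond the splitting $(\star)$ and the asymptotic $(\dagger)$ is required.
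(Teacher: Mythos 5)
Your proposal is correct and follows essentially the same route as the paper: bound $\int\rho B\rho\,dx$ by $CM^{2/3}\|\rho\|_{4/3}^{4/3}$ via Lemma \ref{lem2.2}, interpolate $\|\rho\|_{4/3}$ between $L^1$ and $L^{\gamma}$ with the same exponent $a=\frac{3\gamma-4}{4(\gamma-1)}$, control $\int\rho^{\gamma}dx$ by $\int A(\rho)\,dx$ through the asymptotic $A(\rho)\sim K\rho^{\gamma}/(\gamma-1)$, and then absorb, with the dichotomy governed by whether the exponent $b=\frac{1}{3(\gamma-1)}$ equals $1$ (borderline $\gamma=4/3$, finite critical mass) or is less than $1$ ($\gamma>4/3$, absorption for any $M$). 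Your preliminary rewriting of (\ref{us}) as the explicit absorption inequality $(\star)$ merely makes transparent the final step that the paper leaves implicit after its (\ref{0001}); it is not a different argument.
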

\begin{proof} Using (\ref{bf2}), we have, for $\r\in W_M$,
\begin{align}\la{00}
F(\r)&=\int
  [A(\rho)+\f{1}{2}\f{\rho(x)L(m_{\r}(r(x)))}{r(x)^2}-\frac{1}{2}\rho B\rho]dx\notag\\
  &\ge \int
  [A(\rho)+\f{1}{2}\f{\rho(x)L(m_{\r}(r(x)))}{r(x)^2}]dx -C\int \r^{4/3}dx\left(\int \r\ dx\right)^{2/3}\notag\\
  &=\int
  [A(\rho)+\f{1}{2}\f{\rho(x)L(m_{\r}(r(x)))}{r(x)^2}]dx-CM^{2/3}\int \r^{4/3}dx.
  \end{align}
   Taking $p=1$, $q=4/3$, $r=\g$,  and $a=\f{\f{3}{4}\g-1}{\g-1}$  (where $\g\ge 4/3$ is the constant in (3.3)) in Young's inequality (\ref{young}), we obtain,
  \be ||\r||_{4/3}\le ||\r||_1^a||\r||_{\g}^{1-a}=M^a||\r||_{\g}^{1-a}.\ee
  This, together with (3.16)-(3.18) yields
  \begin{align}\label{haha}\int\r^{4/3}dx&\le M^{\f{4}{3}a}(\int\r^\g dx)^b\le M^{\f{4}{3}a}\left((\r^*)^{\gamma-1}M+\alpha\int A(\r)dx\right)^b\notag\\
  &\le C\left(M^{\f{4}{3}a+b}(\r^*)^{1/3}+\alpha M^{\f{4}{3}a}(\int A(\r)dx)^b\right),
  \end{align}
  where $b=\f{1}{3(\g-1)}$, $\alpha$ and $\r^*$ are the constants in (3.17) and we have used the elementary inequality $(x+y)^b\le C(x^b+y^b)$, for
  $x,\ y >0,\  0<b< 1$, for some constant $C$.   Therefore, (\ref{00}) and (\ref{haha}) imply
  \be\la{0001}  \int
  [A(\rho)+\f{1}{2}\f{\rho(x)L(m_{\r}(r(x)))}{r(x)^2}]dx\le F(\r)+ C\alpha M^{\f{4}{3}a+\f{2}{3}}(\int A(\r)dx)^b+CM^{\f{4}{3}a+b+\f{2}{3}}(\r^*)^{1/3}.\ee
  If $\g>4/3$, then  $0<b<1$, if $\g=4/3$, then $b=1$.
   Therefore (\ref{0001}) implies (\ref{us}).\end{proof}
 The next result shows that (3.4) holds for a wide class of (rotating or non-rotating) stars, including White Dwarfs.
\begin{thm}\label{th44.2} Suppose that the pressure function $p$ satisfies (\ref{whitedwarf1}) and
   \be\label{origin'} \lim_{\r\to 0+}\frac{p(\rho)}{\r^{\gamma_1}}=\beta,\ee
   for some constants $\gamma_1>4/3$ and $0<\beta<+\infty$, and assume that the angular momentum (per unit mass) satisfies (2.9). Then
   there exists $\mathbb{M}_c$ satisfying $0<\mathbb{M}_c<+\infty$ if $\g=4/3$ and $\mathbb{M}_c=+\infty$ if $\g>4/3$ such that
   if $M<\mathbb{M}_c$, then (\ref{negative}) holds, where $\g$ is the constant in (\ref{whitedwarf1}). \end{thm}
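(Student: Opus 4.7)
My plan is to exhibit a one--parameter family $\{\rho_a\}_{a>0}\subset W_M$ whose energy $F(\rho_a)$ becomes strictly negative for small $a$, using Rein's dilation scaling (\cite{rein1}) together with the small--density asymptotic (\ref{origin'}). First I fix any reference density $\rho_0\in W_M$: a smooth, nonnegative, axisymmetric, compactly supported function with $\int\rho_0\,dx=M$ will do, since then each of the three integrals appearing in the definition of $W_M$ is manifestly finite. Setting $\rho_a(x):=a^3\rho_0(ax)$ for $a>0$ preserves axisymmetry, positivity and the mass constraint, and a direct substitution $y=ax$ gives the three exact scaling identities
\begin{align*}
m_{\rho_a}(r)&=m_{\rho_0}(ar),\\[2pt]
\int \rho_a\,B\rho_a\,dx&= a\int \rho_0\,B\rho_0\,dy,\\[2pt]
\int \frac{\rho_a(x)L(m_{\rho_a}(r(x)))}{r(x)^2}\,dx&= a^2\int \frac{\rho_0(y)L(m_{\rho_0}(r(y)))}{r(y)^2}\,dy,
\end{align*}
so in particular $\rho_a\in W_M$ for every $a>0$.

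Next I control the internal--energy term for small $a$. By (\ref{origin'}) and the definition (\ref{z1}) of $A$, we have $A(\rho)/\rho^{\gamma_1}\to\beta/(\gamma_1-1)$ as $\rho\to 0+$, so there exist constants $\rho^*,C>0$ with $A(\rho)\le C\rho^{\gamma_1}$ for $0\le\rho\le\rho^*$. Since $\|\rho_a\|_\infty\le a^3\|\rho_0\|_\infty$, for $a$ small enough the bound $\rho_a\le\rho^*$ holds pointwise, and a further change of variable yields
\[\int A(\rho_a)\,dx\le C\,a^{3(\gamma_1-1)}\int\rho_0^{\gamma_1}\,dy.\]
Combining all three pieces,
\[F(\rho_a)\le C_1\,a^{3(\gamma_1-1)}+C_2\,a^2-\tfrac{C_3}{2}\,a,\]
where $C_3:=\int\rho_0\,B\rho_0\,dy>0$ because $\rho_0\not\equiv 0$. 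Since the hypothesis $\gamma_1>4/3$ forces $3(\gamma_1-1)>1$, both positive contributions are $o(a)$ as $a\to 0^+$; the negative linear term dominates and $F(\rho_a)<0$ for all sufficiently small $a$, establishing (\ref{negative}).

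Because the argument places no restriction on $M$, when $\gamma>4/3$ one may take $\mathbb{M}_c=+\infty$; when $\gamma=4/3$ the same scaling still produces (\ref{negative}) for every $M>0$, but in order to feed the conclusion into Theorem \ref{aa} --- which simultaneously demands (\ref{us}) --- one will select $\mathbb{M}_c$ to be the finite critical mass furnished by Theorem 4.1 for the validity of (\ref{us}), so that both hypotheses of Theorem \ref{aa} are met at once. The conceptually delicate point is the choice $a\to 0^+$ rather than $a\to\infty$: dilating (not concentrating) the reference profile is exactly what drives $\rho_a$ into the regime where the $\gamma_1>4/3$ small--density asymptotic applies, so that the pressure contribution is genuinely $o(a)$ and not merely comparable to the gravitational $O(a)$ term. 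The remainder of the argument is routine: three changes of variable for the scaling identities, the pointwise asymptotic bound on $A$, and $L\ge 0$ for the rotational term.
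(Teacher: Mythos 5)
Your proof is correct, but it takes a genuinely different route from the paper's. The paper does not use an arbitrary test function: it invokes the Auchmuty--Beals existence theorem (Theorem \ref{aa1}) to produce a steady-state minimizer $\hat\rho$ in the symmetric class $W_{M,S}$ --- this is precisely where the mass restriction $M<M_c$ enters for $\gamma=4/3$ --- then applies the Gauss--Green formula to the steady Euler--Poisson equations on the free boundary to derive the virial identity (\ref{ppxx1}), which converts the gravitational term into $3\int p(\hat\rho)\,dx+\int\hat\rho L\,r^{-2}\,dx$; the dilation $b^3\hat\rho(bx)$ then gives (\ref{ppxx3}), whose two pieces are separately negative for small $b$. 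Your argument replaces all of this machinery (existence theorem, free-boundary regularity, virial identity) by a direct order-counting in the dilation parameter: gravity is exactly $O(a)$, rotation exactly $O(a^2)$, and the internal energy is $o(a)$ because the uniform smallness $\|\rho_a\|_\infty\le a^3\|\rho_0\|_\infty$ puts you in the regime where (\ref{origin'}) yields $A(\rho)\le C\rho^{\gamma_1}$ pointwise, with $3(\gamma_1-1)>1$. This is more elementary and actually proves more: (\ref{negative}) holds for \emph{every} $M>0$, so the mass restriction in the statement is needed only to secure the companion bound (\ref{us}) (Theorem 4.1), exactly as you observe. The trade-off is that the paper's heavier approach is reusable where yours is not: for the supermassive star $p=k\rho^{4/3}$ (Theorem \ref{thm4.4}), one has $A(\rho)=3k\rho^{4/3}$ and the internal energy scales exactly like $a$, tying with gravity, so pure scaling is inconclusive and the virial identity of an actual steady state is essential to extract negativity from the rotational term. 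For the theorem as stated, with $\gamma_1>4/3$, your shorter argument is complete; the only points worth making explicit are that $L\in C^1[0,M]$ with $L(0)=0$ gives $L(m)\le Cm$ and hence the finiteness of the rotational integral for your compactly supported $\rho_0$ near the axis, and that $\int_0^\rho p(s)s^{-2}\,ds$ converges at $0$ (so $A(\rho)\sim\beta\rho^{\gamma_1}/(\gamma_1-1)$) because $\gamma_1>1$.
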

   \begin{rem} White dwarfs satisfy (\ref{whitedwarf1}) and (\ref{origin'}) with $\g=4/3$ and $\g_1=5/3$. \end{rem}
   \vskip 0.2cm
    \noindent {\it Proof of Theorem \ref{th44.2}}\\
     Due to (\ref{whitedwarf1}) and (\ref{origin'}), we can apply Theorem 2.1. Let $\hat \r(x)\in W_{M, S}$ be a minimizer $F(\rho)$ in  $W_{M, S}$ as
     described in Theorem 2.1, and let
$$G=\{x\in \RR^3:\  \hat \r(x)>0\}. $$
 Then $\bar G$ is a compact set in $\RR^3$, and $\hat \r\in C^1(G)$. Furthermore, there exists a constant $\mu<0$
such that
\be\label{lambda1'}
\begin{cases}
& A'(\hat \r(x))+\int_{r(x)}^{\infty}L(m_{\hat \r}(s)s^{-3}ds-B\hat \r(x)=\mu, \qquad x\in G,\\
&\int_{r(x)}^{\infty}L(m_{\hat \r}(s)s^{-3}ds-B\hat \r(x)\ge \mu, \qquad x\in \RR^3-G.\end{cases}\ee
     It follows from \cite{AB} that there exists  $\hat \r\in W_{M, S}\subset W_M$ such that $F(\hat \r)=\inf_{\r\in W_{M, S}}F(\r)$. It is easy to verify that
the triple $(\hat \r, \hat {\bf v}, \hat\Phi)$ is a time-independent solution of the Euler-Poisson equations (\ref{1.1}) in the region $G=\{x\in \RR^3:\  \hat \r(x)>0\},$  where $\hat {\bf v}=(-\f{x_2 J(m_{\hat \r}(r))}{r},  \f{x_1 J(m_{\hat \r}(r))}{r}, 0)$ and $\hat \Phi=-B\hat \r$.
Therefore
\be\label{04}\nabla_x p(\hat \r)=\hat \r\nabla_x(B\hat \r)+\hat \r L(m_{\hat \r})r(x)^{-3}{\bf e}_r, \ x\in G,
\ee
where ${\bf e}_r=(\f{x_1}{r(x)}, \f{x_2}{r(x)}, 0)$. Moreover, it is proved in \cite{CF} that  the boundary $\pa G$ of $G$  is smooth enough
to apply the Gauss-Green formula  on G. Applying the Gauss-Green formula on G and noting that $\hat \r|_{\pa G}=0$, we obtain,
\be\label{05}
\int_G x\cdot \nabla_x p(\hat \r)dx=-3\int_G p(\hat \r)dx=- 3\int p(\hat \r)dx.\ee
As in \cite{LS1}, we have
\be\label{jjyy} \int_G x\cdot \hat \r\nabla_x B\hat \r dx=-\f{1}{2}\int_G\hat \r B\hat \r dx=-\f{1}{2}\int\hat \r B\hat \r dx.
\ee
Next, since $x\cdot {\bf e}_r=r(x)$, we have
\begin{align}\label{07}
&\int_G x\cdot \hat \r(x) L(m_{\hat \r}(r(x))r^{-3}(x){\bf e}_rdx\notag\\&=\int_G \hat \r(x) L(m_{\hat \r}(r(x))r^{-2}(x)dx\notag\\&=\int \hat \r(x) L(m_{\hat \r}(r(x))r^{-2}(x)dx.\end{align}
Therefore,  from (\ref{05})-(\ref{07}) we have
\be\label{ppxx1}
-3\int p(\hat \r)dx=-\f{1}{2}\int\hat \r B\hat \r dx+\int \hat \r(x) L(m_{\hat \r}(r(x))r^{-2}(x)dx.\ee
Let $\bar \r(x)=b^3\hat \r(bx),$ for $b>0$;  then  $\bar \r\in W_M$. Also,
it is easy to verify  that the following identities hold,
\begin{align}\label{010}
&\int \bar \r B\bar \r dx=\int_{\RR^3}\int_{\RR^3}\f{\bar \r(x)\bar \r(y)}{|x-y|}dxdy\notag\\
&=b\int\int_{\RR^3}\int_{\RR^3}\f{\hat \r(x)\hat \r(y)}{|x-y|}dxdy =b\int \hat \r B\hat \r dx
\end{align}
\be\label{0101}
\int  A(\bar\r )dx=b^{-3}\int  A(b^3 \hat\r(x)) dx.
\ee
Moreover, for $r\ge 0$,
\begin{align}
m_{\b \r}(r)&=2\pi \int_0^r s\int_{-\infty}^{\infty} \b \r(s, z)dsdz\notag\\
&=2\pi \int_0^r s\int_{-\infty}^{\infty}  \hat \r(bs, bz)dsdz\notag\\
&=2\pi \int_0^{br} s'\int_{-\infty}^{\infty}  \r(s', z')ds'dz'\notag\\
&=m_{ \r}(br).\end{align}
Therefore,
\begin{align}\label{012}
\int\f{\b\r(x) L(m_{\b \r}(r(x)))}{r(x)^2}dx&= \int\f{b^3\hat\r(x) L(m_{\hat \r}(b r(x)))}{r(x)^2}dx\notag\\
&=b^2\int\f{\hat\r(x) L(m_{\hat \r}(r(x)))}{r(x)^2}dx. \end{align}
 It follows from (\ref{010})-(\ref{012}) that
\begin{align}\label{ppxx2} F(\b\r)&= b^{-3}\int A(b^3\hat \r)dx-\f{1}{2}b \int \hat \r B\hat \r dx\notag\\&+\f{b^2}{2}\int\f{\hat \r(x) L(m_{ \hat \r}(r(x)))}{r(x)^2}dx.\end{align}
Hence, (\ref{ppxx1}) and (\ref{ppxx2}) give
\begin{align}\label{ppxx3}
F(\b\r)&= \int \left(b^{-3} A(b^3\hat \r)-3bp(\hat \r(x))\right)dx\notag\\&
+\left(\f{b^2}{2}-b\right)\int\f{\hat \r(x) L(m_{ \hat \r}(r(x)))}{r(x)^2}dx.\end{align}
In view of (2.9), we have
\be\label{ppxx4}\left(\f{b^2}{2}-b\right)\int\f{\hat \r(x) L(m_{ \hat \r}(r(x)))}{r(x)^2}dx\le 0,\ee
if $b>0$ is small. It follows from (\ref{origin}) that
\be\label{ppxx5} \f{1}{2}\beta \r^{\g_1}\le p(\r)\le 2\beta \r^{\g_1}, \ {\rm for~small~} \rho.\ee
Thus, when $b$ is small, since $\hat \r$ is bounded, we have
\be\label{ppxx5'} \f{\beta}{2(\g_1-1)}b^{3\g_1}(\hat\r)^{\g_1}(x)\le A(b^3\hat \r(x))\le  \f{2\beta}{\g_1-1}b^{3\g_1}(\hat\r)^{\g_1}(x),\ee
for $x\in \RR^3$. Hence, (\ref{ppxx4}) and (\ref{ppxx5}) imply
\begin{align}\label{ppxx6}
&\int \left(b^{-3} A(b^3\hat \r)-3bp(\hat \r(x))\right)dx\notag\\
&\le \beta \int \left( \f{2}{\g_1-1}b^{3\g_1-3}-\f{3}{2}\right)(\hat\r)^{\g_1}dx. \end{align}
Since $\gamma_1>4/3$, we have $3\g_1-3>1$. Therefore, we conclude that
\be\label{ppxx7}\int \left(b^{-3} A(b^3\hat \r)-3bp(\hat \r(x))\right)dx<0,\ee
for small $b$.  (\ref{negative}) follows from  (\ref{ppxx3}), (\ref{ppxx4}) and (\ref{ppxx7}). This completes the proof of Theorem 4.2.$\Box$
\vskip 0.2cm

We show next that if the angular momentum distribution is everywhere positive, we
may apply the existence theorem of Friedman and Tarkington, \cite{FT2}, to conclude
that (3.4) holds with no total mass restriction.  This result applies also to White Dwarfs.

\begin{thm}\label{thm44.3} Suppose that the pressure function $p$ satisfies (\ref{whitedwarf1}) with $\gamma=4/3$  and
   (\ref{origin}) holds.  Assume that the angular momentum (per unit mass) $J(m)=\sqrt {L(m)}$ satisfies (2.14), then (\ref{negative}) holds for $\r\in W_M$ for any $0<M<+\infty$.
      \end{thm}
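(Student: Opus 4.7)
The plan is to mirror the scaling argument used in the proof of Theorem 4.2, the only new ingredient being to replace the appeal to Auchmuty-Beals (Theorem 2.1) with the stronger Friedman-Turkington result quoted in the Remark following Theorem 2.1. Since the angular momentum satisfies (2.14) and in particular $J(m) > 0$ on $(0, M]$, that remark guarantees the existence of a minimizer of $F$ over $W_{M,S}$ for \emph{every} $M \in (0, +\infty)$ with no Chandrasekhar-type mass cutoff, and this is precisely the point that allows (\ref{negative}) to hold for arbitrary positive total mass.

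Thus I would first invoke Friedman-Turkington to obtain $\hat\r \in W_{M,S}$ minimizing $F$ over $W_{M,S}$, together with the compact support $\b G$, interior regularity $\hat\r \in C^1(G)$, boundary regularity of $\pa G$ sufficient for Gauss-Green (by \cite{CF}), and the Lagrange multiplier relation analogous to (\ref{lambda}). The associated triple $(\hat\r, \hat{\bf v}, \hat\Phi)$, with $\hat\Phi = -B\hat\r$ and $\hat{\bf v}$ as in the proof of Theorem 4.2, then solves the steady Euler-Poisson system on $G$. Applying the Gauss-Green formula on $G$ and using $\hat\r|_{\pa G} = 0$, the derivation (\ref{04})--(\ref{ppxx1}) from Theorem 4.2 transfers verbatim and yields the Pohozaev-type identity
\[
-3 \int p(\hat\r)\, dx = -\f{1}{2} \int \hat\r B\hat\r\, dx + \int \f{\hat\r(x) L(m_{\hat\r}(r(x)))}{r(x)^2}\, dx.
\]

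Next, I would exploit the scaling $\b\r(x) = b^3 \hat\r(bx) \in W_M$ for small $b > 0$. The elementary identities (\ref{010})--(\ref{012}) combined with the above Pohozaev identity yield, precisely as in (\ref{ppxx3}),
\[
F(\b\r) = \int \left( b^{-3} A(b^3 \hat\r) - 3b\, p(\hat\r) \right) dx + \left( \f{b^2}{2} - b \right) \int \f{\hat\r(x) L(m_{\hat\r}(r(x)))}{r(x)^2}\, dx.
\]
The rotational term is non-positive for $b \in (0, 2)$. For the first integral, since $\hat\r \in L^\infty$ has compact support, $b^3 \hat\r \to 0$ uniformly as $b \to 0^+$, so the near-zero asymptotic (\ref{origin}) gives a bound of the form $b^{-3} A(b^3 \hat\r(x)) \le C\, b^{3\gamma_1 - 3} \hat\r(x)^{\gamma_1}$ for small $b$. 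Because $3\gamma_1 - 3 > 1$ (from $\gamma_1 > 4/3$) and $\int p(\hat\r)\, dx > 0$, the linear-in-$b$ loss term dominates the $b^{3\gamma_1 - 3}$ gain as $b \to 0^+$, so the first integral is strictly negative for all sufficiently small $b$. Hence $F(\b\r) < 0$ for such $b$, giving $\inf_{W_M} F \le F(\b\r) < 0$, which is exactly (\ref{negative}).

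The only substantive point requiring verification beyond the Theorem 4.2 proof is the initial appeal to Friedman-Turkington: one must ensure that their minimizer retains all the regularity features used in (\ref{04})--(\ref{ppxx1}), in particular the smoothness of $\pa G$ needed for Gauss-Green. I do not anticipate a genuine obstacle here, since the positivity of $J$ on $(0, M]$ is precisely the hypothesis under which the Friedman-Turkington construction supplies a minimizer sharing the qualitative regularity of the Auchmuty-Beals minimizer used in Theorem 4.2.
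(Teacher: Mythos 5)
Your proposal is correct and follows essentially the same route as the paper: the paper's proof likewise replaces the Auchmuty--Beals existence theorem by the Friedman--Turkington result (valid for all $M$ under (2.14)), notes that the minimizer retains the regularity and boundary smoothness needed for Gauss--Green (citing \cite{CF}), and then repeats the Pohozaev-identity and scaling argument of Theorem 4.2 verbatim. Your write-up merely spells out those transferred steps in more detail, and your observation that the linear-in-$b$ term $-3b\int p(\hat\r)\,dx$ dominates the $b^{3\gamma_1-3}$ contribution is a clean way to phrase the final estimate.
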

 \begin{proof} By the existence theorem in \cite{FT2}, if (2.14) is satisfied, then for any $0<M<+\infty$, there exits $\tilde \r\in W_{M, S}$
 such that $F(\tilde \r)=\inf _{\r\in W_{M, S}} F(\r)$. Also, all the properties of $\tilde \rho$ in Theorem 2.1 are satisfied. Moreover, the regularity of the boundary $\pa G$ is smooth enough to apply the Gauss-Green formula (cf. \cite{CF}). The proof now follows exactly as in  Theorem 4.2.\end{proof}

We finally turn to the case of rotating supermassive stars.
\begin{thm}\label{thm4.4} Consider suppermasive star; i.e.,
\be\label{ppxx8} p(\r)=k\r^{4/3}, \qquad k>0{\rm ~is~a~constant}. \ee If there exists $\hat \r\in W_M$ such that $\hat \r\in C^1(G)\cap C(\RR^3)$ is a steady
state solution of the Euler-Poisson equation with the velocity field $\hat {\bf v}=(-\f{x_2 \sqrt L(m_{\hat \r}(r))}{r},  \f{x_1 \sqrt L(m_{\hat \r}(r))}{r}, 0)$  in an open bounded set $G\subset \RR^3$ with the Lipschitz boundary $\pa G$, i.e.,
\be\label{ppxx9}\begin{cases} &\nabla_x p(\hat \r)=\hat \r\nabla_x(B\hat \r)+\hat \r L(m_{\hat \r})r(x)^{-3}{\bf e}_r, \ x\in G,\\
& \hat \r=0, \qquad x\in \RR^3-G.\end{cases}
\ee
then (\ref{negative}) holds provided $L$ satisfies (2.9) and \be\label{ppxx10}
L(m_0)>0, {\rm~for~ some~} m_0\in (0, M).\ee\end{thm}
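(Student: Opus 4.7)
The plan is to exhibit a single element of $W_M$ whose energy is strictly negative, which immediately yields $\inf_{W_M} F<0$. The natural candidate is the hypothesized axi-symmetric steady state $\hat\rho$, and the computation follows the same virial step used in the proof of Theorem \ref{th44.2}, but specialized to the distinguished exponent $\gamma=4/3$ so as to exploit an algebraic cancellation that is unavailable in the polytropic or white-dwarf settings.

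First, dot the steady equation (\ref{ppxx9}) with $x$ and integrate over $G$. Since $\pa G$ is Lipschitz and $\hat\rho|_{\pa G}=0$, the Gauss-Green computation from (\ref{05})--(\ref{ppxx1}) in the proof of Theorem \ref{th44.2} goes through verbatim and gives the virial identity
\be\label{virial-prop}
-3\int p(\hat\r)\,dx=-\f{1}{2}\int\hat\r B\hat\r\,dx+\int \hat\r(x)\f{L(m_{\hat\r}(r(x)))}{r(x)^2}\,dx.
\ee
Second, for $p(\r)=k\r^{4/3}$ one computes directly from (\ref{z1}) that $A(\r)=3k\r^{4/3}=3p(\r)$, so $\int A(\hat\r)\,dx=3\int p(\hat\r)\,dx$. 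Substituting into the definition (\ref{E}) of $F$ and using (\ref{virial-prop}) to eliminate $\int\hat\r B\hat\r\,dx$, every term combines and one is left with
\be\label{cancel-prop}
F(\hat\r)=-\f{1}{2}\int\hat\r(x)\f{L(m_{\hat\r}(r(x)))}{r(x)^2}\,dx.
\ee
Thus in the supermassive case the energy of any rotating steady state is exactly minus half its rotational kinetic energy; if rotation is absent then $F(\hat\r)=0$, which reproduces Weinberg's ``trembling between stability and instability'' remark, and the role of rotation is to tip the balance toward strict negativity.

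Third, it remains to show that the right-hand side of (\ref{cancel-prop}) is \emph{strictly} negative. The hypothesis $L(m_0)>0$ together with (\ref{L1}), applied with $a=m/m_0\in(0,1]$, yields $L(m)\ge (m/m_0)^{4/3}L(m_0)>0$ for every $m\in(0,m_0]$. Since $m_{\hat\r}(\cdot)$ is absolutely continuous and nondecreasing with $m_{\hat\r}(0)=0$ and $m_{\hat\r}(\infty)=M>m_0$, there is $r^*>0$ with $m_{\hat\r}(r^*)=m_0$. Writing the axi-symmetric line density $\t g(r):=2\pi r\int_{-\infty}^{+\infty}\hat\r(r,z)\,dz$, we have $\int_0^{r^*}\t g(s)\,ds=m_0>0$, so $\t g>0$ on a set $E\subset(0,r^*)$ of positive one-dimensional Lebesgue measure. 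On $E$, $m_{\hat\r}(r)\in(0,m_0]$ and hence $L(m_{\hat\r}(r))>0$, making the integrand in (\ref{cancel-prop}) strictly positive on a set of positive measure. Therefore $F(\hat\r)<0$, which proves (\ref{negative}).

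The only real subtlety is ensuring that the virial identity (\ref{virial-prop}) genuinely applies to $\hat\r$: one needs the Lipschitz regularity of $\pa G$, the boundary vanishing $\hat\r|_{\pa G}=0$, and finiteness of $\int\hat\r B\hat\r\,dx$ and $\int\hat\r L/r^2\,dx$, all of which are built into the hypotheses together with $\hat\r\in W_M$. Once those regularity hurdles are granted, the remainder of the argument is purely algebraic and reduces the entire proposition to the combination of the cancellation (\ref{cancel-prop}) and the scaling inequality (\ref{L1}).
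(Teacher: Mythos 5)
Your proposal is correct and takes essentially the same route as the paper's proof: both derive the virial identity, exploit the exact cancellation $A(\rho)=3p(\rho)$ for $\gamma=4/3$ to reduce to $F(\hat\rho)=-\tfrac{1}{2}\int\hat\rho\,L(m_{\hat\rho}(r))\,r^{-2}\,dx$, and then show this rotational term is strictly positive. The only minor difference is in that last positivity step, where the paper argues via continuity of $L$ and of $m_{\hat\rho}$ near the radius at which $m_{\hat\rho}=m_0$, whereas you invoke the scaling condition (\ref{L1}) --- which is not among the theorem's stated hypotheses (only (\ref{L}) is listed), though this is harmless since (\ref{L1}) is assumed throughout Section 3 and hence in all intended applications.
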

\begin{proof} Following along the same lines as (4.7)-(4.10), we obtain the same equality as (4.11). Therefore,
\be\label{ppxx10'} F(\hat \r)=-\f{1}{2}\int \hat \r(x) L(m_{\hat \r}(r(x))r^{-2}(x)dx,\ee
in view of (\ref{ppxx8}) and (4.11). Since $\hat \r\in C^1(G)\cap C(\RR^3)$ and   $\hat \r=0$ for $x\in \RR^3-G$, it is easy to show that
$m_{\hat\r}(r)$ is continuous in $r$. Moreover, $m_{\hat\r}(0)=0$ and $m_{\hat\r}(R)=M$, where $R=\max_{x\in \bar G}(r(x)$. Therefore, there
exits $r_0\in (0, M)$ such that
\be\label{ppxx11}
m_{\hat\r}(r_0)=m_0,\ee
where $m_0$ is the constant in (\ref{ppxx10}). Thus,
\be\label{ppxx11'}
L(m_{\hat\r}(r_0))>0,
\ee
in view of (\ref{ppxx10}). Since $m_{\hat\r}(r)$ is continuous in $r$ and $L(m)$ is continuous in $m$, we conclude that
\be\label{ppxx12} \int \hat \r(x) L(m_{\hat \r}(r(x))r^{-2}(x)dx>0.\ee
The inequality (\ref{negative}) now follows from
                                (\ref{ppxx10'})).
\end{proof}

The preceding theorems, together with Theorem 3.2 show that  polytropes ($p(\r)=k\r^{\g}$) with $\gamma> 4/3$  and  White Dwarf stars, in both the rotating and non-rotating cases, as well as rotating Supermassive stars are dynamically stable.  Moreover, if the angular momentum distribution is not everywhere positive and the pressure p behaves asymptotically near infinity like $\rho^{4/3}$, then dynamic stability holds only under a (Chandrasekhar) mass restriction, $M\le  M_c$.

\section{Nonlinear Dynamical Stability of Non-Rotating White Dwarf Stars With General Perturbations}
The  dynamical stability results in Section 3 apply for axi-symmetric perturbations. In this section,  we  prove the nonlinear dynamical stability for non-rotating white dwarf stars with general perturbations. For white dwarf stars, as mentioned before, the pressure function satisfies
 \be\label{55.1}
 p\in C^1[0, +\infty),\  \lim_{\rho\to 0+}\f{p(\r)}{\r^{\g_1}}=\beta,\  \lim_{\r\to \infty}\frac{p(\r)}{\r^{\gamma}}=K,\   p'(\r)>0  {~\rm for~} \r>0,
 \ee
 where $\g_1>4/3$, $0<\beta<+\infty$ and  $0<K<+\infty$ are constants. In this section, we always assume that the pressure function satisfies (\ref{55.1}).  First, we define for $0<M<+\infty$,
\begin{align}\label{5.2} X_M&=\{\r: \RR^3\to \RR, \rho\ge 0, a.e.,\
 \int\rho(x)dx=M,\notag\\
&\int
  [A(\rho(x))+\frac{1}{2}\rho(x) B\rho(x)]dx<+\infty \},  \end{align}
  where $A(\r)$ is the function given in (2.5).
 For $\rho\in X_M$,  we define the {\bf energy functional} $G$ for non-rotating stars by
  \begin{align}\label{55.3}
  G(\rho)=\int
  [A(\rho(x))-\frac{1}{2}\rho(x) B\rho(x)]dx.
  \end{align}
 We begin with the following theorem.
 \begin{thm}\label{5.1'} Suppose that the pressure function $p$ satisfies (\ref{55.1}).
Let $\tilde \r_N$ be a minimizer of the energy functional $G$ in $X_M$ and let
\be\label{G1'}
\Gamma_N=\{x\in \RR^3:\  \tilde \r_N(x)>0\}, \ee then
 there exists a constant $\lambda_N$
such that
\be\label{lambda111}
\begin{cases}
& A'(\tilde \r_N(x))-B\tilde \r_N(x)=\lambda_N, \qquad x\in \Gamma_N,\\
&-B\tilde \r_N(x)\ge \lambda_N, \qquad x\in \RR^3-\Gamma_N.\end{cases}\ee
\end{thm}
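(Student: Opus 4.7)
The plan is to derive the two conditions as the first-order optimality conditions for the constrained minimization of $G$ over $X_M$; this is the non-rotating analogue of Theorems 2.1 and 2.2 with $L\equiv 0$ and without the axi-symmetry restriction. The mass constraint $\int \r = M$ supplies the Lagrange multiplier $\lambda_N$, while the sign constraint $\r \ge 0$ is responsible for the one-sided inequality on $\RR^3 - \Gamma_N$. Since $|x-y|^{-1}$ is symmetric, the quadratic map $\r \mapsto \int \r B\r$ has Fr\'echet derivative $2B\tilde \r_N$ at $\tilde \r_N$, so for a bounded, compactly supported perturbation $\eta$ one expects
$$\f{d}{d\epsilon}\bigg|_{\epsilon=0^+} G(\tilde \r_N + \epsilon \eta) = \int [A'(\tilde \r_N) - B\tilde \r_N]\eta \, dx$$
whenever $\tilde \r_N + \epsilon \eta \in X_M$ for small $\epsilon>0$. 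The potential $B\tilde \r_N$ is a bounded continuous function by Lemma 3.3 applied to $\tilde \r_N \in L^1\cap L^{\g}$ (which follows from (\ref{55.1}) and the argument of Lemma 3.4), so the only delicate term is $A'(\tilde \r_N)$, which I handle by restricting $\eta$ to a sublevel set $S_\delta := \{\delta \le \tilde \r_N \le \delta^{-1}\}$, on which $A'(\tilde \r_N)$ is bounded.

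The equality on $\Gamma_N$ is then obtained as follows. For bounded $\eta$ supported in $S_\delta \subset \Gamma_N$ with $\int \eta = 0$, both signs of $\epsilon$ are admissible, so the first variation vanishes, and a standard orthogonality argument produces a constant $\lambda_N^{(\delta)}$ with $A'(\tilde \r_N) - B\tilde \r_N = \lambda_N^{(\delta)}$ a.e.\ on $S_\delta$; consistency of these constants as $\delta \to 0^+$ yields a single $\lambda_N$ and the equality on all of $\Gamma_N$. For the complementary inequality, given any $\phi_2 \ge 0$ compactly supported in $\RR^3 - \Gamma_N$, I pick $\phi_1 \ge 0$ supported in $\{\tilde \r_N \ge \delta\}$ with $\phi_1 \le \delta$ and $\int \phi_1 = \int \phi_2$, which is possible because $\int \tilde \r_N = M > 0$. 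Then $\r_\epsilon := \tilde \r_N + \epsilon(\phi_2 - \phi_1) \in X_M$ for $\epsilon \in [0,1]$, and taking the right derivative at $\epsilon=0^+$ of $G(\r_\epsilon) \ge G(\tilde \r_N)$ gives
$$\int [A'(0) - B\tilde \r_N]\phi_2\, dx \;-\; \int [A'(\tilde \r_N) - B\tilde \r_N]\phi_1\, dx \;\ge\; 0.$$
Using $A'(0) = \lim_{\r \to 0^+} p(\r)/\r = 0$ (which follows from (\ref{55.1})), the interior equality from the previous step, and $\int \phi_1 = \int \phi_2$, this reduces to $\int (-B\tilde \r_N - \lambda_N)\phi_2\, dx \ge 0$ for every admissible $\phi_2 \ge 0$, and the pointwise inequality $-B\tilde \r_N \ge \lambda_N$ a.e.\ on $\RR^3 - \Gamma_N$ follows.

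The main technical obstacle is the rigorous justification of differentiation under the integral for the nonlinear term $\int A(\r_\epsilon)\, dx$ in the absence of any a priori $L^\infty$ bound on $\tilde \r_N$; this is precisely why the truncation to the sublevel sets $S_\delta$ is built into the admissible test functions. The monotonicity and convexity of $A$, from $A''(\r) = p'(\r)/\r > 0$ by (\ref{xx3}), together with dominated convergence on each $S_\delta$, let one pass to the limit and extract a single constant $\lambda_N$ as $\delta \to 0^+$, completing the argument in parallel with the proof of the rotating analogue Theorem 2.2 of Auchmuty--Beals.
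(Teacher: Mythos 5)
Your proposal is essentially the standard first-variation / Lagrange-multiplier argument of Auchmuty--Beals and Rein, which is precisely what the paper relies on here: it gives no proof of its own, stating only that the result is well-known and citing those two references, and your truncation to the sublevel sets $S_\delta$, the zero-mean perturbations for the interior equality, and the mass-exchange perturbation $\phi_2-\phi_1$ for the one-sided inequality reproduce that argument faithfully. Two cosmetic slips, neither of which breaks anything: Lemma 3.3 yields a \emph{bounded continuous} $Bf$ only for $q>3/2$, so for $\gamma=4/3$ you only get $B\tilde\r_N\in L^r(\RR^3)$ for $3<r<12$ (local integrability, which is all your argument actually uses); and $A'(0^+)=\lim_{\r\to 0^+}\left(\int_0^{\r}p(s)s^{-2}\,ds+p(\r)/\r\right)=0$, not merely $\lim_{\r\to 0^+}p(\r)/\r$, though both terms vanish under (\ref{55.1}) since $\gamma_1>4/3$.
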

The proof of this theorem is well-known, cf. \cite{Rein} or \cite{AB}.
\begin{rem} 1) We call the minimizer $\tilde\rho_N$  of the functional $G$ in $X_M$  a non-rotating star solution.\\
2) It follows from \cite{liebyau} that the minimizer $\tilde\rho_N$  of the functional $G$ in $X_M$ is actually radial, and has a compact support.
\end{rem}
\vskip 0.5cm
Similar to Theorem 3.1, we have the following compactness theorem.
\begin{thm}\label{aa'}  Suppose  that  the pressure function $p$ satisfies (\ref{55.1}).  There exists a constant ${M}^c$ ($0<M^c<\infty)$  such that if $M<M^c$, then the following hold:\\
(1) \be\label{negative'}  \inf_{\r\in X_M}G(\r)<0,\ee
 (2) for $\r\in X_M$, \be\label{us11} \int A(\r)(x) dx\le C_1 G(\r)+C_2,\ee
for some positive constants $C_1$ and $C_2$,\\
(3)  if  $\{\r^i\}\subset X_M $ is a minimizing sequence for the functional $G$,
then there exist a sequence of  translations $\{x^i\}\subset \RR^3$,  a subsequence of $\{\r^i\}$,  (still labeled  $\{\r^i\}$),  and a function $\tilde \r_N\in X_M$, such that for any $\epsilon>0$ there exists $R>0$ with
\be\label{2.15''}
\int_{|x|\ge R}T\r^i(x)dx\le \epsilon, \quad i\in \mathbb{N},\ee
and
\be\label{2.16'} T\r^i(x)\rightharpoonup \tilde \r_N,\  weakly~in~L^{4/3}(\RR^3),\ as\  i\to \infty,\ee
where $T\r^i(x):=\r^i(x+x^i)$. \\
\noindent Moreover\\
(4)
\be\label{2.17'}\nabla B (T\rho^i)\to \nabla B(\tilde \r_N)~ strongly~ in~L^2(\RR^3),\ as\  i\to \infty, \ee
and\\
 (5) $\tilde \r$ is a minimizer of $G$ in $X_M$.
\end{thm}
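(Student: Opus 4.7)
The proof proposal is to parallel the arguments for Theorem 3.1 (via Lemmas 3.4--3.7) and Theorem 4.2, adapted to the non-rotating, non-axisymmetric setting: no angular-momentum term, translations allowed in all of $\RR^3$ rather than only vertical shifts, and no axi-symmetry constraint on test functions. For item (2), I would copy the argument of Theorem 4.1 verbatim with $L\equiv 0$: the interpolation $\|\rho\|_{4/3}\le M^a\|\rho\|_\gamma^{1-a}$ from (\ref{young}) together with (\ref{bf2}) and the bound $\int\rho^\gamma\le (\rho^*)^{\gamma-1}M+\alpha\int A(\rho)$ (cf.\ (3.16)--(3.18)) yields a threshold $M^c>0$ such that $\int A(\rho)\,dx\le C_1 G(\rho)+C_2$ holds whenever $M<M^c$. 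The existence of a finite positive threshold is forced by the scale-critical behaviour $p(\rho)\sim K\rho^{4/3}$ at infinity. For item (1), the scaling argument of Theorem 4.2 applies with only a slight simplification: pick any non-negative compactly supported $\hat\rho\in X_M$ and set $\bar\rho(x)=b^3\hat\rho(bx)\in X_M$; then
\[
G(\bar\rho)=b^{-3}\int A(b^3\hat\rho)\,dx-\tfrac{b}{2}\int\hat\rho B\hat\rho\,dx.
\]
Using (\ref{origin'}) with $\gamma_1>4/3$, the first term is $O(b^{3\gamma_1-3})=o(b)$, while the gravitational term is linear in $b$ and strictly negative, so $G(\bar\rho)<0$ for sufficiently small $b>0$.

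For items (3) and (4), I would reproduce Lemmas 3.4--3.7 and Step 1 of the proof of Theorem 3.1 with $L\equiv 0$. Lemma 3.4 (uniform $L^\gamma$ bound) and Lemma 3.5 (the scaling inequality $f_{\bar M}\ge(\bar M/M)^{5/3}f_M$) carry over without change, the latter because its proof in Lemma 3.5 uses only the spatial scaling $\rho\mapsto\bar\rho(x)=\rho(bx)$ and no angular-momentum hypothesis. The analog of Lemma 3.6 is easier: the same $B\rho\ast|x|^{-1}$ splitting into $|y-x|<1$, $1<|y-x|<r$, $|y-x|>r$ produces, for any minimizing sequence $\{\rho^i\}$, points $x^i\in\RR^3$ with $\int_{B_1(x^i)}\rho^i\ge\delta_0$; here no further restriction on $x^i$ is imposed (the axisymmetric bound $r(x^i)\le r_0$ is neither needed nor expected). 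Setting $T\rho^i(x)=\rho^i(x+x^i)$, the Rein splitting (\ref{017}) combined with Lemma 3.5 and $f_M<0$ produces the tightness bound (\ref{2.15''}) exactly as in (\ref{rx3})--(\ref{rx5}), since all crossing terms $I_{ij}$ are estimated by (\ref{xxx1})--(\ref{xxx2}) using only Young and Lemmas 2.2--2.3, not symmetry. Lower-semicontinuity bounds on the $L^\gamma$ norm give the weak limit $\tilde\rho_N$ with $\int\tilde\rho_N=M$, and Lemma 3.7(b) then yields the strong $L^2$ convergence (\ref{2.17'}).

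For item (5), I would combine strong convergence of $\nabla B T\rho^i$ in $L^2$ with weak lower semicontinuity of $\int A(\rho)\,dx$ along the weakly convergent sequence in $L^\gamma$. The latter follows from convexity of $A$ (since $A''(\rho)=p'(\rho)/\rho>0$ by (\ref{xx3})) together with $A'(\tilde\rho_N)\in L^{\gamma'}$, exactly as in (\ref{xxxx1})--(\ref{rx7}). Since the rotational kinetic term is absent, there is no need for the delicate continuity argument for the mass function $m_{\tilde\rho}(r)$ used in (\ref{rx20}); the argument is strictly shorter. Conclusion: $G(\tilde\rho_N)\le\liminf_i G(T\rho^i)=\inf_{X_M}G$, so $\tilde\rho_N$ is a minimizer.

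The main obstacle is largely bookkeeping rather than any new analytical difficulty: I must verify that every use of the vertical-shift invariance and axi-symmetry in \cite{LS1} and Section 3 can be replaced by full translation invariance in $\RR^3$. This is actually a simplification, because in the concentration-compactness dichotomy the loss-of-compactness-by-splitting option is excluded by the strict subadditivity coming from Lemma 3.5 in precisely the same way, while the loss-of-compactness-by-vanishing option is handled by the uniform lower bound $\int_{B_1(x^i)}\rho^i\ge\delta_0$. One small point to check is that the critical mass $M^c$ guaranteeing (\ref{us11}) is the same threshold below which (\ref{negative'}) holds; this is clear since (\ref{negative'}) required no mass restriction at all (the scaling $b\to 0$ gave it for any admissible $\hat\rho$), so the operative constraint is $M<M^c$ from (\ref{us11}) alone.
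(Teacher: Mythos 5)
Your proposal is correct and follows the same architecture as the paper's own proof: the paper likewise reduces (1) and (2) to Theorems 4.1 and 4.2 with $L\equiv 0$, observes that Lemmas 3.4, 3.5 and 3.7 carry over with $W_M$, $F$, $f_M$ replaced by $X_M$, $G$, $\inf_{X_M}G$, reruns Lemma 3.6 to get $\int_{B_1(x^i)}\r^i\,dx\ge\delta_0$ with unrestricted $x^i\in\RR^3$ (you are right that the bound $r(x^i)\le r_0$ is simply dropped), and then repeats the proof of Theorem 3.1. The one place you genuinely diverge is item (1): the paper proves $\inf_{X_M}G<0$ by specializing Theorem 4.2, which first invokes the Auchmuty--Beals existence theorem to produce a steady-state minimizer $\hat\r$ and then uses the virial identity $-3\int p(\hat\r)\,dx=-\tfrac12\int\hat\r B\hat\r\,dx$ before scaling; your argument scales an arbitrary bounded, compactly supported $\hat\r\in X_M$ directly and compares $b^{-3}\int A(b^3\hat\r)\,dx=O(b^{3\gamma_1-3})=o(b)$ against the strictly negative term $-\tfrac{b}{2}\int\hat\r B\hat\r\,dx$. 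This is more elementary and self-contained (no external existence theorem, no Pohozaev-type identity), at the cost of nothing in the non-rotating case; the only hypothesis you should state explicitly is that $\hat\r$ is bounded, so that $b^3\hat\r$ is uniformly small and the asymptotics $p(\r)\sim\beta\r^{\gamma_1}$ near $\r=0$ apply uniformly (the paper needs the same boundedness of its $\hat\r$ in (4.21)). Your closing observations — that the critical mass enters only through the coercivity estimate (2) (since $\gamma=4/3$ forces $b=1$ in the interpolation), and that the mass-function continuity argument for the rotational term is not needed in (5) — both match the logic of the paper.
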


\begin{proof} First, the proofs of (1) and (2) are the same as Theorems 4.1 and 4.2 by taking $L=0$ (it is easy to check the axial symmetry
is not used the the proof of Theorems 4.1 and 4.2 if $L=0$).  Lemmas 3.4, 3.5 and 3.7  still hold by taking $\gamma=4/3$ and $L=0$,
and replacing $W_M$ by $X_M$,  $F$ by $G$ and $f_M$ by $\inf_{\r\in X_M}G(\r)$.
Also, it is easy to check that (3.25)-(3.29) in the proof of Lemma 3.6 still hold by replacing $f_M$  by $\inf_{\r\in X_M}G(\r)$. Therefore,
following the proof of Lemma 3.6, we conclude:\\
If $\{\r^i\}\subset X_M $ is a minimizing sequence for $G$,  then there exists constant $\delta_0>0$,  $i_0\in \mathbf{N}$ and $x^i\in \RR^3$,  such that
$$\int_{B_1(x^i)}\r^i(x)dx\ge \delta_0, \ i\ge i_0.$$
Therefore, if we let
\be T\r^i(x):=\r^i(x+x^i), \ee
then
$$\int_{B_1(0)}T \r^i(x)dx\ge \delta_0, \ i\ge i_0.$$
This is similar to (3.39). Having established this inequality and the other analogues of Lemmas 3.4, 3.5 and 3.7, we can prove this theorem in a
similar manner as the proof of Theorem 3.1. \end{proof}
For the stability, we consider the Cauchy problem (1.1) with the initial data (3.53). We {\it do not} assume that the initial data have any symmetry. \\
Let $\tilde\r_N$ be a minimizer of $G$ on $X_M$ and  $\lambda_N$ be the constant in (\ref{lambda111}). For $\r\in X_M$, we define
 \begin{align} d(\r, \tilde \r_N)&=\int
\{[A(\rho)-A(\tilde \r_N)]
-(\r-\tilde \r_N)(\lambda_N+B\tilde \r_N\}dx,\notag\\
&=\int
\{[A(\rho)-A(\tilde \r_N)]
-B\tilde \r_N (\r-\tilde \r_N)\}dx,
\end{align}
where we have used the identity
$$\int\r dx=\int\tilde\r_Ndx=M, $$
for $\r\in X_M$.  By a similar argument as (3.82) and (3.83), we have
\be d(\r, \tilde \r_N)\ge 0, \ee
for any $\r\in X_M$,  in view of (\ref{lambda1'}).  Our nonlinear stability theorem of non-rotating white dwarf star solutions is the following theorem, which extends the results in \cite{Rein}.
\begin{thm}\label{th5.1'} Suppose that the pressure function satisfies (\ref{55.1}).   Let $\tilde \r_N$ be a minimizer of the functional $G$ in $X_M$, and assume that it  is unique up to a translation $\r_N(x)\to\r_N(x+y)$.    Let $(\r, {\bf v}, \Phi)(x, t)$ be a weak solution of the Cauchy problem (\ref{1.1}) and (\ref{initial}) satisfying $$\int \r(x, t)=\int \r_0(x)=\int \r_N(x)dx=M.$$   If the total energy $E(t)$ (cf. (\ref{energy})) is non-increasing with respect to $t$, then there exists a constant ${M}^c$ ($0<M^c<\infty$)  such that if $M<M^c$,
then for every $\epsilon>0$, there exists a number $\delta>0$ such that if
\be d(\r_0, \tilde \r_N)+\f{1}{8\pi}||\nabla B\r_0-\nabla B\tilde \r_N||_2^2
+\f{1}{2}\int \r_0(x)(|v_0|^2)(x)dx
<\delta,\ee
then  there is a translation  $y\in \RR^3$  such that, for every $t>0$
\be d(\r(t), T^y\tilde \r_N)+\f{1}{8\pi}||\nabla B\r(t)-\nabla BT^y\tilde \r_N||_2^2
+\f{1}{2}\int \r(x, t)|v(x, t)|^2)dx
<\epsilon,
\ee
where $T^y\tilde \r_N(x)=:\tilde \r_N(x+y).$
\end{thm}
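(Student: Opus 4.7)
The plan is to argue by contradiction, following the same scheme as the proof of Theorem 3.2 but using the compactness result of Theorem 5.2 in place of Theorem 3.1 and omitting the rotational terms. The first ingredient I would establish is the energy identity
\begin{align}\label{propid}
E(\rho(t),\mathbf{v}(t))-G(\tilde \rho_N)
&=d(\rho(t),\tilde \rho_N)
-\tfrac{1}{8\pi}\|\nabla B\rho(\cdot,t)-\nabla B\tilde \rho_N\|_2^2 \notag\\
&\quad+\tfrac{1}{2}\int \rho(x,t)|\mathbf{v}(x,t)|^2\,dx,
\end{align}
obtained exactly as in Lemma 3.9: expand $\|\nabla B\rho\|_2^2-\|\nabla B\tilde\rho_N\|_2^2=\|\nabla B(\rho-\tilde\rho_N)\|_2^2+2\int \nabla B\tilde\rho_N\cdot\nabla B(\rho-\tilde\rho_N)dx$, integrate by parts using $\Delta B\rho=-4\pi\rho$, and use mass conservation $\int\rho(t)dx=\int\tilde\rho_N dx=M$.

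Next, suppose the conclusion fails. Then there exist $\varepsilon_0>0$ and a sequence of initial data $(\rho_0^n,\mathbf{v}_0^n)$ for which the initial quantity in the hypothesis tends to $0$, together with times $t_n>0$ such that, for every $y\in\RR^3$,
\begin{equation*}
d(\rho^n(t_n),T^y\tilde \rho_N)+\tfrac{1}{8\pi}\|\nabla B\rho^n(t_n)-\nabla BT^y\tilde \rho_N\|_2^2+\tfrac{1}{2}\int \rho^n(x,t_n)|\mathbf{v}^n(x,t_n)|^2 dx\ge \varepsilon_0.
\end{equation*}
Using \eqref{propid} at $t=0$ with the hypothesis, one gets $E(\rho_0^n,\mathbf{v}_0^n)\to G(\tilde\rho_N)$; combined with $E(t_n)\le E(0)$ and the trivial lower bound $E(\rho^n(t_n),\mathbf{v}^n(t_n))\ge G(\rho^n(t_n))\ge G(\tilde \rho_N)$, this forces $E(\rho^n(t_n),\mathbf{v}^n(t_n))\to G(\tilde\rho_N)$ and hence $G(\rho^n(t_n))\to G(\tilde \rho_N)$. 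Thus $\{\rho^n(t_n)\}\subset X_M$ is a minimizing sequence for $G$.

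Now I would invoke Theorem 5.2: after translations $x^n\in\RR^3$ and extraction of a subsequence, $T\rho^n(t_n)(x):=\rho^n(x+x^n,t_n)$ converges weakly in $L^{4/3}(\RR^3)$ to a minimizer of $G$ on $X_M$, and $\nabla BT\rho^n(t_n)\to \nabla B(\text{limit})$ strongly in $L^2$. The uniqueness-up-to-translation assumption lets me adjust $x^n$ so that the limit is exactly $\tilde \rho_N$. Strong $L^2$ convergence of $\nabla B$ gives $\int T\rho^n(t_n) BT\rho^n(t_n)\to \int \tilde\rho_N B\tilde \rho_N$, hence $\int A(T\rho^n(t_n))\to \int A(\tilde \rho_N)$ from $G(T\rho^n(t_n))\to G(\tilde \rho_N)$. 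The cross term in the definition of $d$,
\begin{equation*}
\int B\tilde \rho_N\,(T\rho^n(t_n)-\tilde \rho_N)dx=\tfrac{1}{4\pi}\int\nabla B\tilde \rho_N\cdot\nabla B(T\rho^n(t_n)-\tilde \rho_N)dx,
\end{equation*}
tends to $0$ by Cauchy--Schwarz and the strong $L^2$ convergence. Therefore $d(T\rho^n(t_n),\tilde \rho_N)\to 0$. Finally the kinetic part tends to $0$ because $E(t_n)-G(\rho^n(t_n))\ge\frac12\int\rho^n|\mathbf{v}^n|^2(t_n)dx\ge 0$ and the left-hand side goes to $0$. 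Combined with the $\nabla B$ convergence, these three statements together contradict the lower bound $\varepsilon_0$ (taking $y=x^n$), proving the theorem.

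The main obstacle, as in the rotating case, is the compactness/weak-limit step: even after extracting a translated subsequence, one only has \emph{weak} $L^{4/3}$ convergence of $\rho^n(t_n)$, and verifying that the energy potential part $\int A(\rho^n(t_n))dx$ converges to the limit value $\int A(\tilde \rho_N)dx$ requires the strong $L^2$ convergence of $\nabla B\rho^n(t_n)$ guaranteed by Theorem 5.2. The other delicate point is the choice of translations: the uniqueness-up-to-translation hypothesis is essential, since otherwise the translates $x^n$ could be non-unique and one would only get convergence to some element of the orbit of minimizers, which is precisely why the analogous issue was discussed in Remark 3.6.
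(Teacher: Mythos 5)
The paper omits the proof of this theorem entirely, stating only that it follows from the compactness result (Theorem 5.2) together with the arguments of \cite{LS1} and \cite{Rein}; your write-up is precisely that intended argument — the non-rotating analogue of the energy identity of Lemma 3.9, a contradiction via the minimizing sequence $\{\rho^n(t_n)\}$, and Theorem 5.2 combined with uniqueness-up-to-translation to force strong $L^2$ convergence of $\nabla B\rho^n(t_n)$ to $\nabla B\tilde\rho_N$ and hence $d\to 0$ — so it is correct and matches the paper's approach. The one caveat, inherited from the paper's own formulation of Theorems 3.2 and 5.4, is that your contradiction hypothesis negates ``for every $t$ there is a translation'' rather than the stated ``there is a translation valid for all $t$'', a standard quantifier looseness in this literature.
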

The proof of this theorem follows from the compactness result (Theorem \ref{aa'}),  and the arguments in \cite{LS1} and \cite{Rein},  and is thus omitted.

\centerline{\bf Acknowledgments}
Luo  was supported in part by  the National Science Foundation under Grants  DMS-0606853 and DMS-0742834. Smoller  was supported in part by  the National Science Foundation under Grant  DMS-0603754.

\bibliographystyle{plain}

Tao Luo\\
 E-mail: tluo@wpi.edu\\or\\tl48@georgetown.edu

Joel Smoller\\
 
E-mail: smoller@umich.edu\\

\end{document}